\documentclass[11pt,a4paper]{article}
\usepackage{subcaption}

\usepackage{amsmath,amssymb,amsthm,bm,mathrsfs}
\usepackage{xcolor}
\usepackage{booktabs}
\usepackage{natbib}
\usepackage{authblk}
\usepackage{threeparttable}
\usepackage{multirow}
\usepackage[colorlinks=true,linkcolor=blue,citecolor=blue,urlcolor=blue]{hyperref}
\usepackage{graphicx}
\usepackage{makecell}
\newtheorem{corollary}{Corollary}
\newtheorem{theorem}{Theorem}
\newtheorem{lemma}{Lemma}
\newtheorem{assumption}{Assumption}

\newtheorem{remark}{Remark}

\newcommand{\Prob}{\mathbb{P}}

\title{\textbf{Testing Conditional Stochastic Dominance via Copula Derivatives}}
		
        \author[a]{Weiqi Yang}
        \author[a]{Weiwei Zhuang}
		\author[b]{Xiaojun Song\thanks{Corresponding author: \texttt{sxj@gsm.pku.edu.cn} (X. Song).}}
		
		\affil[a]{School of Management, University of Science and Technology of China, China}
		\affil[b]{Guanghua School of Management, Peking University, China}
		
\date{\today}

\begin{document}
\maketitle
\begin{abstract}
Comparing two populations at the same physical covariate value requires more than conditional means or isolated target-point decisions: researchers may need evidence about an entire conditional-distribution ordering over a continuum, even when covariate margins differ. This paper makes that common-value comparison estimable under an explicit structure--flexibility tradeoff and turns the resulting surface into simultaneous evidence for first-order stochastic dominance. Population-specific margins map the common covariate value into each group, while a fitted copula-derivative representation links conditional distributions across the region. Uniform inference propagates uncertainty from both the margins and dependence model through a one-sided statistic with unknown binding locations. Under correct specification within a finite copula class, smoothness and trimming conditions, and a uniquely best candidate family, the procedure admits uniform control and consistent calibration. Simulations show increasing rejection as alternatives become more distinguishable, alongside model-selection sensitivity and small-sample size distortion. In a descriptive PSID application, the high--low parental-education comparison satisfies the two-direction criterion after multiplicity adjustment, whereas adjacent education-group comparisons remain inconclusive. The framework therefore supports region-wide distributional comparison while making its structural and inferential boundaries explicit.
\end{abstract}

\section{Introduction}

Comparing populations at a common covariate value is often a distributional problem rather than a question about conditional means. Income opportunities, treatment responses, risks, and performance can differ in dispersion, tails, or crossing patterns even when their averages are similar. The relevant object is therefore the full conditional distribution, $H_g(y\mid x)=\Prob(Y_g\leq y\mid X_g=x)$. Researchers may also need to know whether an ordering holds throughout an empirically relevant conditioning region, not only at selected values. This distinction is especially consequential when populations have different covariate margins: a common percentile and a common physical value of $x$ then define different comparisons. This paper studies first-order dominance at common values of a continuous covariate over a trimmed continuum of conditioning and outcome values, while allowing each population to retain its own covariate margin.

The paper belongs to a broad inferential lineage rather than a narrow estimator class. Unconditional stochastic-dominance testing established how distributional inequalities can be evaluated without reducing welfare or risk comparisons to a few moments \citep{McFadden1989,Anderson1996,DavidsonDuclos2000,BarrettDonald2003,LintonMaasoumiWhang2005,LintonSongWhang2010}. Conditional work has since developed several complementary branches: global tests and conditional-moment formulations \citep{DelgadoEscanciano2013,AndrewsShi2013,AndrewsShi2017}, conditional treatment and distributional-effect tests \citep{Abadie2002,ChangLeeWhang2015}, inference at local or target covariate values \citep{DonaldHsuBarrett2012,ShenZhang2016,GoldmanKaplan2018,QuYoon2019,BugniCanayKim2025}, and structured or dynamic dominance procedures \citep{GonzaloOlmo2014,LintonSeoWhang2023}. These branches establish a substantial intellectual neighborhood for comparisons of conditional distributions, but they address different conditioning domains, statistical objects, and inferential targets.

The principal benchmarks sharpen the research question at hand. Delgado and Escanciano test global conditional-dominance restrictions through integrated joint-distribution differences \citep{DelgadoEscanciano2013}, while Chang, Lee, and Whang directly test conditional stochastic dominance between treatment and control distributions over covariate values \citep{ChangLeeWhang2015}. Linton, Seo, and Whang study first- and second-order dominance with high- or growing-dimensional dynamic information under a location–scale structure \citep{LintonSeoWhang2023}, and Gonzalo and Olmo develop conditional-dominance tests in dynamic settings \citep{GonzaloOlmo2014}. These papers establish that neither continuum-wide CSD nor structured conditional dominance is new. A distinct finite/local branch asks for distributional inference at selected conditioning values: Bugni, Canay, and Kim consider one or finitely many prespecified targets \citep{BugniCanayKim2025}, with related cutoff and local-distribution procedures developed by Shen and Zhang and Goldman and Kaplan \citep{ShenZhang2016,GoldmanKaplan2018}. The present paper addresses a different inferential target: one simultaneous two-population ordering over a trimmed continuum of common physical $(x,y)$ values under separate margins and an explicit dependence class. This difference matters because local decisions cannot authorize a region-wide conclusion, while the principal global benchmarks use different statistical representations and maintained structures.

Moving from selected targets to a surface changes the inferential problem. The inequality may bind at unknown locations on a two-dimensional continuum, while estimated marginal mappings and dependence parameters affect every point of that surface. The paper addresses this difficulty through a copula derivative. The established representation links a conditional distribution to the derivative of its joint copula \citep{JanssenSwanepoelVeraverbeke2017}: population-specific margins locate the same physical $x$ within each population, while the copula supplies coherence across conditioning values. Combined with semiparametric copula estimation \citep{GenestGhoudiRivest1995,Tsukahara2005}, this yields one fitted conditional-distribution surface for each population. The copula derivative is thus enabling literature, not the novelty protagonist. It is useful here because its separation of marginal location and dependence matches the added common-value surface problem and makes the maintained structure transparent.

The paper makes two research-level contributions. First, it makes the full common-value, two-population conditional distribution surface estimable under an explicit structure–flexibility trade-off. This capability permits coherent comparison over the continuum rather than a collection of unrelated local fits. Empirical margins and a fitted dependence representation are the enabling components, not the contribution headline. Second, the paper turns that fitted surface into simultaneous evidence about a uniform ordering. Building on contact-set methods for stochastic dominance \citep{LintonSongWhang2010} and directional inference for nonsmooth functionals \citep{FangSantos2019}, the analysis propagates uncertainty from estimated margins and dependence through a supremum whose least-favorable locations are unknown. This contribution is inseparable from the research question: without uniform calibration, the surface would remain a visualization or a collection of unadjusted pointwise contrasts rather than a defensible statement of dominance.

The theory validates the complete procedure under a correctly specified finite copula class, smoothness and trimming conditions, and a uniquely best candidate family. At a high level, it establishes uniform control of the estimated conditional-distribution contrast and consistent calibration of the one-sided dominance statistic. The simulations show why both the guarantee and its scope matter. Rejection probabilities generally rise as alternatives become easier to distinguish, while small-sample calibration remains sensitive to model selection and design. The numerical evidence therefore supports the procedure as a workable, structured option, but does not equate asymptotic validity with automatic robustness to misspecification or selection ambiguity.

An application to intergenerational income mobility illustrates the substantive payoff. The analysis compares adult family-income-rank distributions across parental-education groups at common childhood-income ranks, with childhood-income margins differing across groups. The high–low education comparison satisfies the paper's two-direction criterion after adjustment for multiplicity, whereas the adjacent high–middle and middle–low comparisons remain inconclusive. The result identifies bounded endpoint distributional separation without forcing a complete education gradient or a causal interpretation. The remainder of the paper develops the structured comparison, establishes its inferential properties, evaluates its finite-sample behavior, and reports the mobility application.
\section{Preliminaries}
\label{sec:preliminaries}

Conditional stochastic dominance is used to compare outcome distributions across populations after accounting for covariates. Existing conditional dominance procedures often focus on objects of the form $\Prob(Y\le y\mid X\le x)$ or rely on nonparametric conditional distribution estimators. These approaches are useful, but they do not directly address the pointwise conditional distribution $\Prob(Y\le y\mid X=x)$ when the conditioning variable is continuously distributed. In that case, the event $\{X=x\}$ has probability zero, and the conditional distribution is not obtained by dividing the joint distribution function by the marginal distribution function.

This paper studies pointwise conditional stochastic dominance through a copula derivative representation. If the joint distribution of $(X,Y)$ admits the copula representation $F(x,y)=C\{F_X(x),F_Y(y)\}$ and the relevant derivatives exist, then
\[
\Prob(Y\le y\mid X=x)=\frac{\partial_x F(x,y)}{f_X(x)}
=\partial_u C\{F_X(x),F_Y(y)\}.
\]
Thus, the comparison of pointwise conditional distributions can be transformed into a comparison of copula partial derivatives evaluated at the marginal probability indices. This transformation avoids direct estimation of conditional densities or bandwidth-dependent conditional distribution functions.

The proposed test combines this identification result with a semiparametric copula estimator. The marginal distributions are estimated nonparametrically by empirical distribution functions. The dependence structure is estimated by a finite collection of parametric copula families using canonical maximum pseudo-likelihood, and the candidate models are averaged by smooth AIC weights. Under a unique best-fitting copula condition, the model-averaged estimator has the same first-order behavior as the oracle estimator based on the selected copula family. The resulting test is a one-sided Kolmogorov--Smirnov statistic for the supremum of the difference between the estimated conditional distributions. Critical values are obtained by simulating the limiting Gaussian process using estimated influence functions.

\subsection{Pointwise Conditional Distribution}
\label{subsec:pointwise_conditional_distribution}

Let \(Z_{gi}=(X_{gi},Y_{gi})\), \(i=1,\ldots,n_g\), denote an i.i.d. sample from population \(g\in\{1,2\}\). The two samples are independent, and
\[
    \frac{n_1}{n_1+n_2}\to \lambda\in(0,1).
\]
Let \(F_g(x,y)\) be the joint distribution function of \((X_g,Y_g)\), with marginal distribution functions \(F_{gX}\) and \(F_{gY}\). Throughout, \(X_g\) is continuously distributed with density \(f_{gX}\). Inference is conducted on a compact trimmed set \(\Lambda_\varepsilon\) such that, for some fixed \(\varepsilon\in(0,1/2)\),
\[
    F_{gX}(x),\,F_{gY}(y)\in[\varepsilon,1-\varepsilon],
    \qquad
    (x,y)\in\Lambda_\varepsilon,\quad g=1,2.
\]
This trimming excludes boundary regions where copula derivatives may be unstable.

By Sklar's theorem, there exists a copula \(C_g\) such that
\[
    F_g(x,y)
    =
    C_g\!\left(F_{gX}(x),F_{gY}(y)\right).
\]
The object of interest is the pointwise conditional distribution of \(Y_g\) given \(X_g=x\),
\[
    H_g(y\mid x)
    =
    \Prob(Y_g\le y\mid X_g=x).
\]

When \(X_g\) is continuous, \(H_g(y\mid x)\) is not equal to \(F_g(x,y)/F_{gX}(x)\). Instead, if \(F_g(x,y)\) is differentiable with respect to \(x\) and \(f_{gX}(x)>0\), then
\[
    H_g(y\mid x)
    =
    \frac{\partial F_g(x,y)/\partial x}{f_{gX}(x)}.
\]
Using the copula representation and the chain rule,
\[
    \frac{\partial F_g(x,y)}{\partial x}
    =
    \partial_u C_g\!\left(F_{gX}(x),F_{gY}(y)\right)
    f_{gX}(x),
\]
where \(\partial_u C_g\) denotes the partial derivative of \(C_g\) with respect to its first argument. Therefore,
\[
    H_g(y\mid x)
    =
    \partial_u C_g\!\left(F_{gX}(x),F_{gY}(y)\right).
\]
This identity is the key observation underlying our approach: the pointwise conditional distribution can be represented by the first partial derivative of the copula evaluated at the marginal probability levels of \((x,y)\).

\subsection{Hypotheses and Test Statistics}
\label{subsec:hypotheses}

We say that population 1 dominates population 2 in the pointwise conditional stochastic dominance sense over \(\Lambda_\varepsilon\) if
\[
    H_1(y\mid x)\le H_2(y\mid x),
    \qquad
    \forall (x,y)\in\Lambda_\varepsilon.
\]
Equivalently, define the conditional dominance contrast
\[
    \Delta(x,y)
    =
    H_1(y\mid x)-H_2(y\mid x).
\]
The null hypothesis is
\[
    \mathcal H_0:
    \Delta(x,y)\le 0,
    \qquad
    \forall (x,y)\in\Lambda_\varepsilon,
\]
or equivalently,
\[
    \mathcal H_0:
    \sup_{(x,y)\in\Lambda_\varepsilon}\Delta(x,y)\le 0.
\]
The alternative hypothesis is
\[
    \mathcal H_1:
    \sup_{(x,y)\in\Lambda_\varepsilon}\Delta(x,y)>0.
\]

Using the copula derivative representation, the dominance contrast can be written as
\[
    \Delta(x,y)
    =
    \partial_u C_1\!\left(F_{1X}(x),F_{1Y}(y)\right)
    -
    \partial_u C_2\!\left(F_{2X}(x),F_{2Y}(y)\right).
\]
Thus, the testing problem is transformed into a comparison of two copula derivative functions evaluated at population-specific marginal probability levels.

The null hypothesis is one-sided and can be written as
\[
    \mathcal H_0:
    \sup_{(x,y)\in\Lambda_\varepsilon}\Delta(x,y)\le 0.
\]
Therefore, a natural test is based on the largest positive deviation of an estimated dominance contrast from zero.

Let \(\widehat H_g(y\mid x)\) denote an estimator of \(H_g(y\mid x)\), whose construction is described in the next subsection. Define
\[
    \widehat\Delta(x,y)
    =
    \widehat H_1(y\mid x)-\widehat H_2(y\mid x),
    \qquad
    (x,y)\in\Lambda_\varepsilon .
\]
Let
\[
    s_N
    =
    \sqrt{\frac{n_1n_2}{n_1+n_2}} .
\]
We consider the one-sided Kolmogorov--Smirnov statistic
\[
    T_N
    =
    s_N
    \sup_{(x,y)\in\Lambda_\varepsilon}
    \widehat\Delta(x,y).
\]
Large positive values of \(T_N\) indicate that the estimated conditional distribution of population 1 exceeds that of population 2 at some point in \(\Lambda_\varepsilon\), and hence provide evidence against pointwise conditional stochastic dominance. The asymptotic distribution of \(T_N\), the construction of the relevant contact set, and the critical value calculation are developed in the next section.

\subsection{Estimation Strategy}
\label{subsec:estimation_strategy}

We estimate \(H_g(y\mid x)\) using a semiparametric copula approach. Let
\[
    \mathcal M
    =
    \{C_1,\ldots,C_Q\}
\]
be a finite set of candidate parametric copula families. For each population \(g\), the method proceeds in three steps.

First, estimate the marginal distributions nonparametrically by the scaled empirical distribution functions
\[
    \widehat F_{gX}(x)
    =
    \frac{1}{n_g+1}
    \sum_{i=1}^{n_g}
    \mathbf 1(X_{gi}\le x),
    \qquad
    \widehat F_{gY}(y)
    =
    \frac{1}{n_g+1}
    \sum_{i=1}^{n_g}
    \mathbf 1(Y_{gi}\le y).
\]
The corresponding pseudo-observations are
\[
    \widehat U_{gi}
    =
    \widehat F_{gX}(X_{gi}),
    \qquad
    \widehat V_{gi}
    =
    \widehat F_{gY}(Y_{gi}).
\]

Second, for each candidate copula family \(C_\ell(\cdot,\cdot;\theta_\ell)\), estimate the copula parameter by canonical maximum pseudo-likelihood:
\[
    \widehat\theta_{g\ell}
    =
    \arg\max_{\theta_\ell\in\Theta_\ell}
    \sum_{i=1}^{n_g}
    \log c_\ell(\widehat U_{gi},\widehat V_{gi};\theta_\ell),
\]
where \(c_\ell\) is the density associated with \(C_\ell\).

Third, assign AIC weights to the candidate copulas. Let
\[
    \operatorname{AIC}_{g\ell}
    =
    -2
    \sum_{i=1}^{n_g}
    \log c_\ell(\widehat U_{gi},\widehat V_{gi};\widehat\theta_{g\ell})
    +
    2k_\ell,
\]
where \(k_\ell\) is the dimension of \(\theta_\ell\). Define
\[
    \widehat w_{g\ell}
    =
    \frac{
    \exp\!\left[
    -\frac{1}{2}
    \left\{
    \operatorname{AIC}_{g\ell}
    -
    \min_{1\le j\le Q}\operatorname{AIC}_{gj}
    \right\}
    \right]
    }{
    \sum_{m=1}^{Q}
    \exp\!\left[
    -\frac{1}{2}
    \left\{
    \operatorname{AIC}_{gm}
    -
    \min_{1\le j\le Q}\operatorname{AIC}_{gj}
    \right\}
    \right]
    }.
\]

The model-averaged estimator of the pointwise conditional distribution is
\[
    \widehat H_g(y\mid x)
    =
    \sum_{\ell=1}^{Q}
    \widehat w_{g\ell}
    \,
    \partial_u C_\ell
    \!\left(
    \widehat F_{gX}(x),
    \widehat F_{gY}(y);
    \widehat\theta_{g\ell}
    \right).
\]
Substituting this estimator into the contrast gives the feasible process
\[
    \widehat\Delta(x,y)
    =
    \widehat H_1(y\mid x)
    -
    \widehat H_2(y\mid x),
\]
which is used in the statistic \(T_N\) defined above.

\section{Asymptotic Theory}
\label{sec:asymptotic_theory}

This section establishes the asymptotic properties of the semiparametric
copula derivative estimator and the resulting supremum test statistic. The main
steps are as follows. First, we derive a uniform asymptotic linear
representation for \(\widehat H_g(y\mid x)\). Second, we combine the two-sample
weak convergence result with the directional differentiability of the supremum
functional to obtain the limiting distribution of \(T_N\). Finally, we construct
multiplier critical values based on estimated influence functions.

\subsection{Regularity Conditions}
\label{subsec:regularity_conditions}

For \(g\in\{1,2\}\), write
\[
    u_g(x)=F_{gX}(x),
    \qquad
    v_g(y)=F_{gY}(y).
\]
Let \(\ell_g^\ast\) denote the population copula model selected in the limit,
and let \(\theta_g\) be the corresponding parameter. For notational simplicity,
write
\[
    C_g(u,v)
    =
    C_{\ell_g^\ast}(u,v;\theta_g).
\]
All copula derivatives below are evaluated at
\((u_g(x),v_g(y);\theta_g)\). Specifically, define
\[
    C_{g,uu}(x,y)
    =
    \partial_{uu} C_{\ell_g^\ast}
    \{u_g(x),v_g(y);\theta_g\},
\]
\[
    C_{g,uv}(x,y)
    =
    \partial_{uv} C_{\ell_g^\ast}
    \{u_g(x),v_g(y);\theta_g\},
\]
and
\[
    C_{g,u\theta}(x,y)
    =
    \partial_{u\theta} C_{\ell_g^\ast}
    \{u_g(x),v_g(y);\theta_g\}.
\]

\begin{assumption}[Sampling, trimming, and smoothness]
\label{ass:basic_regular}
The two samples are independent, and
\(n_1/(n_1+n_2)\to\lambda\in(0,1)\). The set
\(\Lambda_\varepsilon\) is compact and satisfies
\(F_{gX}(x),F_{gY}(y)\in[\varepsilon,1-\varepsilon]\) for all
\((x,y)\in\Lambda_\varepsilon\) and \(g=1,2\). The marginal distributions are
continuous, and the density \(f_{gX}\) is bounded away from zero on the relevant
support. For each \(g\), the derivatives
\(\partial_u C_{\ell_g^\ast}\), \(\partial_{uu} C_{\ell_g^\ast}\),
\(\partial_{uv} C_{\ell_g^\ast}\), and
\(\partial_{u\theta} C_{\ell_g^\ast}\) exist, are uniformly bounded, and are
uniformly continuous on the trimmed domain.
\end{assumption}

\begin{assumption}[Candidate copulas and first-stage estimation]
\label{ass:copula_first_stage}
The candidate set $\mathcal M=\{C_1,\ldots,C_Q\}$ is finite and fixed.
For each population $g$ and each candidate family $\ell$, let
\[
    M_{g\ell}(\theta)
    =
    E\{\log c_\ell(U_g,V_g;\theta)\}.
\]
The corresponding normalized sample pseudo-likelihood criterion is
\[
    \widehat M_{g\ell}(\theta)
    =
    \frac{1}{n_g}
    \sum_{i=1}^{n_g}
    \log c_\ell(\widehat U_{gi},\widehat V_{gi};\theta).
\]
The criterion $M_{g\ell}$ has a unique maximizer
$\theta_{g\ell}^{\ast}$, and $\widehat M_{g\ell}$ satisfies
\[
    \sup_{\theta\in\Theta_\ell}
    \left|
    \widehat M_{g\ell}(\theta)-M_{g\ell}(\theta)
    \right|
    \overset{p}{\longrightarrow}0.
\]

The true copula belongs to the candidate set and is represented by a
unique family $\ell_g^\ast$. Moreover, for every
$\ell\ne\ell_g^\ast$,
\[
    \kappa_{g\ell}
    =
    M_{g\ell_g^\ast}(\theta_g)
    -
    M_{g\ell}(\theta_{g\ell}^{\ast})
    >0.
\]

For the limiting family $\ell_g^\ast$, $\theta_g$ is an interior point of
$\Theta_{\ell_g^\ast}$. On a neighborhood of $\theta_g$, the copula
log-density score and its derivatives with respect to the parameter and the two
copula arguments are measurable, continuously differentiable in their relevant
arguments, and dominated by square-integrable envelopes. The corresponding
local score and derivative classes are $P$-Donsker, the score map is
continuously differentiable under marginal-distribution perturbations, and the
negative expected parameter Hessian is nonsingular.

The canonical maximum pseudo-likelihood estimator in family
$\ell_g^\ast$ satisfies
\[
    \sqrt{n_g}(\widehat\theta_g-\theta_g)
    =
    \frac{1}{\sqrt{n_g}}
    \sum_{i=1}^{n_g}\psi_g(Z_{gi})
    +o_p(1),
\]
where $E\{\psi_g(Z_g)\}=0$,
$E\|\psi_g(Z_g)\|^2<\infty$, and
$\operatorname{Var}\{\psi_g(Z_g)\}$ is positive definite.
\end{assumption}
\begin{remark}[Influence function of the CMPL estimator]
\label{rem:psi_g}
Let
\[
    U_g=F_{gX}(X_g),
    \qquad
    V_g=F_{gY}(Y_g),
\]
and let $(U_g',V_g')$ be an independent copy of $(U_g,V_g)$. 
For the limiting copula family $\ell_g^\ast$, define the
log copula density
\[
    \ell_g(u,v;\theta)
    =
    \log c_{\ell_g^\ast}(u,v;\theta).
\]
The score evaluated at the true parameter is
\[
    s_g(u,v)
    =
    \left.
    \nabla_\theta
    \ell_g(u,v;\theta)
    \right|_{\theta=\theta_g},
\]
and the sensitivity matrix is
\[
    J_g
    =
    -E\left[
    \left.
    \nabla_{\theta\theta^\top}^{2}
    \ell_g(U_g,V_g;\theta)
    \right|_{\theta=\theta_g}
    \right].
\]
For $u_0,v_0\in(0,1)$, define the marginal-estimation correction
functions
\[
\begin{aligned}
    r_{gX}(u_0)
    &=
    E\left[
    \partial_u s_g(U_g',V_g')
    \left\{
    \mathbf 1(u_0\le U_g')-U_g'
    \right\}
    \right],\\
    r_{gY}(v_0)
    &=
    E\left[
    \partial_v s_g(U_g',V_g')
    \left\{
    \mathbf 1(v_0\le V_g')-V_g'
    \right\}
    \right].
\end{aligned}
\]
Then the influence function in
Assumption~\ref{ass:copula_first_stage} is
\[
    \psi_g(Z_g)
    =
    J_g^{-1}
    \left[
    s_g(U_g,V_g)
    +
    r_{gX}(U_g)
    +
    r_{gY}(V_g)
    \right].
\]
The first term is the usual copula score, while the last two terms account
for replacing the unknown marginal distributions by empirical distribution
functions. The representation requires the combined corrected score to be
well defined and square integrable and $J_g$ to be nonsingular.
\end{remark}

Under Assumption~\ref{ass:copula_first_stage}, the AIC weights concentrate on
the uniquely identified copula model. Hence the model-averaged estimator is
first-order equivalent to the oracle estimator based on
\(\ell_g^\ast\).

\begin{lemma}[Oracle equivalence of the model-averaged estimator]
\label{lem:oracle_equivalence}
Under Assumptions~\ref{ass:basic_regular} and
\ref{ass:copula_first_stage},
\[
    \sup_{(x,y)\in\Lambda_\varepsilon}
    \left|
    \widehat H_g(y\mid x)
    -
    \partial_u C_{\ell_g^\ast}
    \{\widehat F_{gX}(x),\widehat F_{gY}(y);\widehat\theta_g\}
    \right|
    =
    o_p(n_g^{-1/2}),
\]
for \(g=1,2\).
\end{lemma}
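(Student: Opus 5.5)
The idea is to show that the AIC weights on the wrong families vanish exponentially fast, so their contributions are negligible at any polynomial rate. Write $\widehat D_{g\ell}(x,y)=\partial_u C_\ell\{\widehat F_{gX}(x),\widehat F_{gY}(y);\widehat\theta_{g\ell}\}$. Because the weights sum to one,
\[
    \widehat H_g(y\mid x)-\widehat D_{g\ell_g^\ast}(x,y)
    =\sum_{\ell\ne\ell_g^\ast}\widehat w_{g\ell}\bigl\{\widehat D_{g\ell}(x,y)-\widehat D_{g\ell_g^\ast}(x,y)\bigr\}.
\]
Every copula derivative $\partial_u C_\ell$ takes values in $[0,1]$, since it is a conditional distribution function. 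Hence the supremum over $\Lambda_\varepsilon$ of the absolute difference is at most $\sum_{\ell\ne\ell_g^\ast}\widehat w_{g\ell}$. This bound needs no uniformity argument over $(x,y)$ and no smoothness of the misspecified families. It therefore suffices to show $\widehat w_{g\ell}=o_p(n_g^{-1/2})$ for each $\ell\ne\ell_g^\ast$. Since $Q$ is fixed, the finite sum then inherits the rate.

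To bound the weights, note that $\widehat w_{g\ell}\le \exp\{-\tfrac12(\operatorname{AIC}_{g\ell}-\operatorname{AIC}_{g\ell_g^\ast})\}$. This holds because the denominator is at least the term for $\ell_g^\ast$ divided by the common minimum factor; equivalently, the ratio to the $\ell_g^\ast$ term bounds the weight. In terms of the normalized criteria,
\[
    \tfrac12(\operatorname{AIC}_{g\ell}-\operatorname{AIC}_{g\ell_g^\ast})
    = n_g\bigl\{\widehat M_{g\ell_g^\ast}(\widehat\theta_{g})-\widehat M_{g\ell}(\widehat\theta_{g\ell})\bigr\}+(k_\ell-k_{\ell_g^\ast}).
\]
It remains to show that the bracket converges in probability to $\kappa_{g\ell}>0$.

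By the uniform convergence in Assumption~\ref{ass:copula_first_stage},
\[
    \widehat M_{g\ell}(\widehat\theta_{g\ell})=\sup_\theta \widehat M_{g\ell}(\theta)=\sup_\theta M_{g\ell}(\theta)+o_p(1)=M_{g\ell}(\theta^\ast_{g\ell})+o_p(1).
\]
The same argument applies to family $\ell_g^\ast$, giving $\widehat M_{g\ell_g^\ast}(\widehat\theta_g)=M_{g\ell_g^\ast}(\theta_g)+o_p(1)$. Here I use $\theta^\ast_{g\ell_g^\ast}=\theta_g$, which holds because the true copula is in that family and maximizes the expected log-density. I would also use the fact that the maximized sample and population criteria differ by at most the sup-norm deviation. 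Thus, with probability tending to one, the bracket exceeds $\kappa_{g\ell}/2$. Then $\widehat w_{g\ell}\le e^{|k_\ell-k_{\ell_g^\ast}|}\exp(-n_g\kappa_{g\ell}/2)$, and so $n_g^{1/2}\widehat w_{g\ell}\to 0$ in probability.

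The main obstacle is this step: justifying that $\widehat M_{g\ell}(\widehat\theta_{g\ell})$ converges to the population maximum even though the log-density may be unbounded near the corners of the unit square. The assumed uniform convergence over $\Theta_\ell$ is exactly what is needed here, so no further work is required beyond quoting it. The derivative bound by one removes any need for Assumption~\ref{ass:basic_regular} except to make $\Lambda_\varepsilon$ and the oracle term well defined.
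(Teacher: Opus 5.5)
Your proof is correct and follows the paper's argument. You bound the supremum by $\sum_{\ell\ne\ell_g^\ast}\widehat w_{g\ell}$ using that copula derivatives lie in $[0,1]$, and you get exponential decay of the wrong-family AIC weights from the uniform convergence of the pseudo-likelihood criteria and the separation $\kappa_{g\ell}>0$, exactly as the paper does. Your explicit step $|\sup_\theta\widehat M_{g\ell}-\sup_\theta M_{g\ell}|\le\sup_\theta|\widehat M_{g\ell}-M_{g\ell}|$ only spells out what the paper compresses into ``by the uniform law of large numbers.''
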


\subsection{Uniform Linear Expansion}
\label{subsec:uniform_linear_expansion}

Define the influence function of \(\widehat H_g(y\mid x)\) at
\(z=(x,y)\) by
\[
\begin{aligned}
    \phi_{g,z}(Z_g)
    =
    &\ C_{g,uu}(x,y)
    \left\{\mathbf 1(X_g\le x)-F_{gX}(x)\right\}
    \\
    &+
    C_{g,uv}(x,y)
    \left\{\mathbf 1(Y_g\le y)-F_{gY}(y)\right\}
    \\
    &+
    C_{g,u\theta}(x,y)^\top
    \psi_g(Z_g).
\end{aligned}
\]
The first two terms capture the effect of estimating the marginal
distributions, while the last term captures the effect of estimating the copula
parameter.

\begin{theorem}[Uniform asymptotic linearity]
\label{thm:linear_h}
Under Assumptions~\ref{ass:basic_regular} and
\ref{ass:copula_first_stage}, for \(g=1,2\),
\[
    \sup_{(x,y)\in\Lambda_\varepsilon}
    \left|
    \sqrt{n_g}
    \{\widehat H_g(y\mid x)-H_g(y\mid x)\}
    -
    \frac{1}{\sqrt{n_g}}
    \sum_{i=1}^{n_g}
    \phi_{g,(x,y)}(Z_{gi})
    \right|
    =
    o_p(1).
\]
Consequently,
\[
    \sqrt{n_g}
    \{\widehat H_g-H_g\}
    \rightsquigarrow
    \mathbb G_g
    \quad
    \text{in }
    \ell^\infty(\Lambda_\varepsilon),
\]
where \(\mathbb G_g\) is a tight mean-zero Gaussian process with covariance
kernel
\[
    \Sigma_g(z,z')
    =
    E\{\phi_{g,z}(Z_g)\phi_{g,z'}(Z_g)\}.
\]
\end{theorem}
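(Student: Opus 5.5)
By Lemma~\ref{lem:oracle_equivalence}, it suffices to prove the expansion with $\widehat H_g$ replaced by the oracle quantity
\[
\widetilde H_g(y\mid x)=\partial_u C_{\ell_g^\ast}\{\widehat F_{gX}(x),\widehat F_{gY}(y);\widehat\theta_g\}.
\]
The error incurred is $o_p(n_g^{-1/2})$ uniformly on $\Lambda_\varepsilon$. I also use the identity $H_g(y\mid x)=\partial_u C_{\ell_g^\ast}\{u_g(x),v_g(y);\theta_g\}$ from Section~\ref{subsec:pointwise_conditional_distribution}. It holds because the true copula is the unique family $\ell_g^\ast$ with parameter $\theta_g$ (Assumption~\ref{ass:copula_first_stage}). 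The problem then reduces to a first-order Taylor (delta-method) expansion of the smooth map
\[
(u,v,\theta)\mapsto \partial_u C_{\ell_g^\ast}(u,v;\theta)
\]
around $(u_g(x),v_g(y),\theta_g)$, carried out uniformly over $(x,y)\in\Lambda_\varepsilon$.

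\textbf{Step 1 (localization).} By Glivenko--Cantelli, $\sup_x|\widehat F_{gX}(x)-F_{gX}(x)|=O_p(n_g^{-1/2})$, and the factor $n_g/(n_g+1)$ changes this only by $O(n_g^{-1})$. The same holds for $\widehat F_{gY}$. By Assumption~\ref{ass:copula_first_stage}, $\widehat\theta_g-\theta_g=O_p(n_g^{-1/2})$. Hence with probability tending to one the pseudo-arguments $(\widehat F_{gX}(x),\widehat F_{gY}(y))$ lie in $[\varepsilon/2,1-\varepsilon/2]^2$ for all $(x,y)\in\Lambda_\varepsilon$, and $\widehat\theta_g$ lies in the neighborhood of $\theta_g$ where the derivatives exist. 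I read Assumption~\ref{ass:basic_regular} as guaranteeing boundedness and uniform continuity on such a slightly enlarged compact domain.

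\textbf{Step 2 (mean-value expansion).} Write
\[
\widetilde H_g-H_g=C_{g,uu}^\dagger(\widehat F_{gX}-F_{gX})+C_{g,uv}^\dagger(\widehat F_{gY}-F_{gY})+C_{g,u\theta}^{\dagger\top}(\widehat\theta_g-\theta_g),
\]
where the daggered derivatives are evaluated at intermediate points. Uniform continuity of the second derivatives on the compact domain, together with the $O_p(n_g^{-1/2})$ uniform distance of these intermediate points from $(u_g,v_g,\theta_g)$, gives
\[
\sup|C^\dagger-C|=o_p(1).
\]
Multiplying by the $O_p(n_g^{-1/2})$ uniform sizes of the increments, the replacement error is $o_p(n_g^{-1/2})$ uniformly.

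\textbf{Step 3 (linear representation).} Substitute
\[
\sqrt{n_g}(\widehat F_{gX}-F_{gX})(x)=n_g^{-1/2}\sum_i\{\mathbf 1(X_{gi}\le x)-F_{gX}(x)\}+O_p(n_g^{-1/2})
\]
uniformly, and similarly for $\widehat F_{gY}$. Also substitute the linear representation of $\sqrt{n_g}(\widehat\theta_g-\theta_g)$ via $\psi_g$ from Assumption~\ref{ass:copula_first_stage}. Because $C_{g,u\theta}$ is uniformly bounded, the $o_p(1)$ remainder for $\widehat\theta_g$ stays $o_p(1)$ uniformly in $(x,y)$. Collecting terms yields exactly $n_g^{-1/2}\sum_i\phi_{g,(x,y)}(Z_{gi})$, which establishes the first display of Theorem~\ref{thm:linear_h}.

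\textbf{Step 4 (weak convergence).} The class $\{\phi_{g,z}:z\in\Lambda_\varepsilon\}$ is $P$-Donsker. Its three pieces are handled as follows:
\begin{itemize}
\item Indicator half-lines form VC classes, and multiplying them by uniformly bounded deterministic functions of $z$ preserves the Donsker property, since the product has the form $\sum_k a_k(z)h_{k,z}$ with $h$ ranging over a Donsker class and bounded coefficients.
\item The third piece is $C_{g,u\theta}(z)^\top\psi_g$ with a fixed square-integrable vector $\psi_g$ and bounded coefficients. It is a finite-dimensional linear class, hence Donsker.
\end{itemize}
Sums of Donsker classes are Donsker, and the envelope is square integrable. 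The functional CLT then gives convergence in $\ell^\infty(\Lambda_\varepsilon)$ to a tight Gaussian process with covariance $E\{\phi_{g,z}\phi_{g,z'}\}$. The expansion from Steps 1--3 transfers this limit to $\sqrt{n_g}(\widehat H_g-H_g)$ by Slutsky's lemma.

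\textbf{Main obstacle.} The delicate point is Step 1--2 uniformity near the trimming boundary. The assumption states smoothness only on $[\varepsilon,1-\varepsilon]^2$, while the empirical marginals can push arguments slightly outside it. I would handle this by interpreting Assumption~\ref{ass:basic_regular} on a closed $\delta$-enlargement, or equivalently by shrinking $\varepsilon$. Copula derivatives are smooth on the open interior for standard families, so this interpretation costs nothing. The event that the arguments leave the enlargement has vanishing probability.
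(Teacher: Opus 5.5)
Your proposal is correct and follows essentially the same route as the paper's proof. Both arguments reduce to the oracle estimator via Lemma~\ref{lem:oracle_equivalence}, take a uniform first-order expansion of $(u,v,\theta)\mapsto\partial_u C_{\ell_g^\ast}(u,v;\theta)$ with the remainder controlled by uniform continuity of $C_{g,uu}$, $C_{g,uv}$, $C_{g,u\theta}$ and the $O_p(n_g^{-1/2})$ size of the perturbations, substitute the empirical-CDF and $\psi_g$ expansions, and close with the VC-plus-finite-dimensional Donsker argument for $\{\phi_{g,z}:z\in\Lambda_\varepsilon\}$. Two minor remarks: the uniform $O_p(n_g^{-1/2})$ rate for the empirical margins comes from the DKW inequality or Donsker's theorem rather than Glivenko--Cantelli, and your explicit enlargement of the trimmed domain makes precise what the paper assumes implicitly when it invokes differentiability of $m_g$ ``in a neighborhood of the true arguments.''
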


The empirical-process conditions required for
Theorem~\ref{thm:linear_h} are not imposed as separate high-level assumptions.
They follow from the VC property of lower-orthant indicator classes, the
bounded smoothness of the copula derivative weights on the trimmed domain, and
the finite-dimensional asymptotic linear representation of
\(\widehat\theta_g\).

\subsection{Asymptotic Properties of the Statistic}
\label{subsec:two_sample_supremum_limit}

Before stating the result, define
\[
    \mathbb G_\Delta(z)
    =
    \sqrt{1-\lambda}\,\mathbb G_1(z)
    -
    \sqrt{\lambda}\,\mathbb G_2(z),
    \qquad
    z\in\Lambda_\varepsilon,
\]
where \(\mathbb G_1\) and \(\mathbb G_2\) are independent copies of the
population-specific Gaussian limits in Theorem~\ref{thm:linear_h}.  The limit process has covariance kernel
\[
    \Sigma_\Delta(z,z')
    =
    (1-\lambda)\Sigma_1(z,z')
    +
    \lambda\Sigma_2(z,z').
\]
Under the continuity and Donsker conditions used in
Theorem~\ref{thm:linear_h}, the process
\(\mathbb G_\Delta\) admits a version with almost surely uniformly
continuous sample paths on the compact set
\(\Lambda_\varepsilon\). Also define
\[
    \Gamma^\ast(\Delta)
    =
    \left\{
    z\in\Lambda_\varepsilon:
    \Delta(z)
    =
    \sup_{\tilde z\in\Lambda_\varepsilon}\Delta(\tilde z)
    \right\}.
\]
Under the boundary null, \(\sup_{z\in\Lambda_\varepsilon}\Delta(z)=0\), this set
coincides with the contact set
\[
    \Gamma(\Delta)
    =
    \{z\in\Lambda_\varepsilon:\Delta(z)=0\}.
\]

\begin{theorem}[Two-sample weak convergence]
\label{thm:two_sample_sup_limit}
Under Assumptions~\ref{ass:basic_regular} and
\ref{ass:copula_first_stage},
\[
    s_N
    \{\widehat\Delta-\Delta\}
    \rightsquigarrow
    \mathbb G_\Delta
    \quad
    \text{in }
    \ell^\infty(\Lambda_\varepsilon).
\]
Moreover,
\[
    T_N
    -
    s_N
    \sup_{z\in\Lambda_\varepsilon}
    \Delta(z)
    \rightsquigarrow
    \sup_{z\in\Gamma^\ast(\Delta)}
    \mathbb G_\Delta(z).
\]

\end{theorem}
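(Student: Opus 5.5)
The plan has two parts: a two-sample functional CLT for the contrast process, then the functional delta method for the supremum map with the directional derivative of \citet{FangSantos2019}.

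\textbf{Weak convergence of the contrast.}
Start from the decomposition
\[
    s_N\{\widehat\Delta-\Delta\}
    =
    \frac{s_N}{\sqrt{n_1}}\sqrt{n_1}\{\widehat H_1-H_1\}
    -
    \frac{s_N}{\sqrt{n_2}}\sqrt{n_2}\{\widehat H_2-H_2\}.
\]
Since $s_N/\sqrt{n_1}=\sqrt{n_2/(n_1+n_2)}\to\sqrt{1-\lambda}$ and $s_N/\sqrt{n_2}\to\sqrt{\lambda}$, these deterministic scalings converge.

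By Theorem~\ref{thm:linear_h}, each $\sqrt{n_g}\{\widehat H_g-H_g\}$ converges weakly in $\ell^\infty(\Lambda_\varepsilon)$ to a tight limit $\mathbb G_g$. The two samples are independent, so the pair converges jointly in $\ell^\infty(\Lambda_\varepsilon)\times\ell^\infty(\Lambda_\varepsilon)$ to $(\mathbb G_1,\mathbb G_2)$ with independent tight components (van der Vaart--Wellner, Example 1.4.6). Here I rely on tightness of each marginal limit, which Theorem~\ref{thm:linear_h} provides.

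The map $(a,b,h_1,h_2)\mapsto a h_1-b h_2$ is continuous. Slutsky's lemma together with the continuous mapping theorem then gives
\[
    s_N\{\widehat\Delta-\Delta\}\rightsquigarrow\mathbb G_\Delta,
\]
with the stated covariance kernel, which follows from independence. Uniform continuity of the sample paths of $\mathbb G_\Delta$ is already asserted before the theorem, as a consequence of tightness and the continuity of $z\mapsto\phi_{g,z}$ in $L^2$.

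\textbf{The supremum statistic.}
Let $\phi(f)=\sup_{z\in\Lambda_\varepsilon}f(z)$ on $\ell^\infty(\Lambda_\varepsilon)$. Then
\[
    T_N-s_N\phi(\Delta)=s_N\{\phi(\widehat\Delta)-\phi(\Delta)\}.
\]
I will show that $\phi$ is Hadamard directionally differentiable at $\Delta$, tangentially to $C(\Lambda_\varepsilon)$, with derivative
\[
    \phi'_\Delta(h)=\sup_{z\in\Gamma^\ast(\Delta)}h(z).
\]
First, $\Delta$ is continuous on the compact set $\Lambda_\varepsilon$. This holds because $\partial_u C_{\ell_g^\ast}$ is uniformly continuous by Assumption~\ref{ass:basic_regular} and the margins are continuous. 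Consequently $\Gamma^\ast(\Delta)$ is nonempty and compact.

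Take $t_n\downarrow0$ and $h_n\to h$ uniformly, with $h$ continuous.
\begin{itemize}
\item For the lower bound, for any $z\in\Gamma^\ast$ we have $\phi(\Delta+t_nh_n)\ge\phi(\Delta)+t_nh_n(z)$. This gives $\liminf\ge\sup_{\Gamma^\ast}h$.
\item For the upper bound, choose $z_n$ that nearly attains $\phi(\Delta+t_nh_n)$, up to an error $o(t_n)$. Then $\Delta(z_n)\ge\phi(\Delta)-2t_n\|h_n\|_\infty-o(t_n)$, so $\Delta(z_n)\to\phi(\Delta)$. By compactness and continuity of $\Delta$, every limit point of $z_n$ lies in $\Gamma^\ast$. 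Therefore
\[
    t_n^{-1}\{\phi(\Delta+t_nh_n)-\phi(\Delta)\}\le h_n(z_n)\le h(z_n)+o(1),
\]
and $\limsup h(z_n)\le\sup_{\Gamma^\ast}h$ by continuity of $h$.
\end{itemize}

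\textbf{Conclusion and main obstacle.}
Since $\mathbb G_\Delta$ concentrates on $C(\Lambda_\varepsilon)$, the delta method for Hadamard directionally differentiable maps (\citealp{FangSantos2019}, Theorem~2.1) yields
\[
    T_N-s_N\phi(\Delta)\rightsquigarrow\sup_{z\in\Gamma^\ast(\Delta)}\mathbb G_\Delta(z).
\]
I expect the main obstacle to be the upper bound in the directional-differentiability step. It requires both the continuity of $\Delta$ and the continuity of the limit path $h$, so it is where the tangential restriction to $C(\Lambda_\varepsilon)$ and the sample-path continuity of $\mathbb G_\Delta$ are genuinely used. I will therefore verify carefully that the limit process has continuous paths on the set where $\Delta$ is evaluated, including under the metric induced by the compact set $\Lambda_\varepsilon$.
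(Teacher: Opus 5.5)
Your proposal is correct and follows the paper's proof essentially step for step: joint convergence of the two independent processes from Theorem~\ref{thm:linear_h}, then Slutsky and continuous mapping with $s_N/\sqrt{n_1}\to\sqrt{1-\lambda}$ and $s_N/\sqrt{n_2}\to\sqrt{\lambda}$, then the directional delta method for the supremum functional tangentially to $C(\Lambda_\varepsilon)$. The only difference is that you prove the Hadamard directional differentiability of the supremum map directly, and your lower/upper-bound argument is sound. The paper instead cites Lemma~S.4.9 of \citet{FangSantos2019} for this step, and it invokes the delta method through Shapiro and D\"umbgen rather than through Fang--Santos Theorem~2.1.
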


\begin{remark}[Local alternatives]
\label{rem:local_power}
Let \(\Delta_0\in C(\Lambda_\varepsilon)\) be a boundary-null contrast satisfying
\[
    \Delta_0(z)\le0
    \quad\text{for all }z\in\Lambda_\varepsilon,
    \qquad
    \sup_{z\in\Lambda_\varepsilon}\Delta_0(z)=0,
\]
and consider a sequence of local perturbations such that, uniformly over
\(\Lambda_\varepsilon\),
\[
    \Delta_N(z)
    =
    \Delta_0(z)
    +
    s_N^{-1}h(z)
    +
    o(s_N^{-1}),
\]
where \(h\in C(\Lambda_\varepsilon)\). Suppose the conditions of
Theorem~\ref{thm:two_sample_sup_limit} hold uniformly along this sequence and
the centered contrast process has the same Gaussian weak limit
\(\mathbb G_\Delta\). Then
\[
    T_N
    \rightsquigarrow
    \sup_{z\in\Gamma(\Delta_0)}
    \left\{
    h(z)+\mathbb G_\Delta(z)
    \right\},
\]
where
\(
\Gamma(\Delta_0)
=
\{z\in\Lambda_\varepsilon:\Delta_0(z)=0\}
\)
is the contact set of the limiting null.

Let \(c_{1-\alpha}(\Delta_0)\) denote the \((1-\alpha)\)-quantile of
\(\sup_{z\in\Gamma(\Delta_0)}\mathbb G_\Delta(z)\). If its distribution is
continuous at this quantile, the local asymptotic rejection probability is
\[
\Prob\left[
    \sup_{z\in\Gamma(\Delta_0)}
    \left\{h(z)+\mathbb G_\Delta(z)\right\}
    >c_{1-\alpha}(\Delta_0)
    \right].
\]
Thus, first-order local power is determined by the restriction of \(h\) to the
contact set. Positive drift on that set can increase rejection probability,
whereas directions satisfying
\(\sup_{z\in\Gamma(\Delta_0)}h(z)\le0\) do not constitute an outward
first-order perturbation of the null at its binding points.
\end{remark}

Let \(c_{1-\alpha}\) denote the
\((1-\alpha)\)-quantile of 
\( \sup_{z\in\Gamma(\Delta)}\mathbb G_\Delta(z)\)
when \(\sup_{z\in\Lambda_\varepsilon}\Delta(z)=0\), and set
\(c_{1-\alpha}=0\) when \(\sup_{z\in\Lambda_\varepsilon}\Delta(z)<0\).
When the null is on the boundary, we assume that 
\(\sup_{z\in\Gamma(\Delta)}\mathbb G_\Delta(z)\) is continuous at
\(c_{1-\alpha}\). 
The decision rule is to reject \(\mathcal{H}_0\) whenever
\[
    T_N>c_{1-\alpha}.
\]

\begin{theorem}[Oracle size control and consistency]
\label{thm:oracle_test_validity}
Suppose Assumptions~\ref{ass:basic_regular} and
\ref{ass:copula_first_stage} hold. Then, under \(\mathcal{H}_0\),
\[
    \lim_{n_1,n_2\to\infty}
    \Prob(T_N>c_{1-\alpha})
    \le \alpha .
\]
Moreover, under  \(\mathcal{H}_1\),
\[
    \Prob(T_N>c_{1-\alpha})\to1 .
\]
\end{theorem}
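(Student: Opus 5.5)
The plan is to derive both claims from Theorem~\ref{thm:two_sample_sup_limit}, splitting the null into an interior case and a boundary case. Throughout, write $\sigma=\sup_{z\in\Lambda_\varepsilon}\Delta(z)$. Theorem~\ref{thm:two_sample_sup_limit} gives
\[
    T_N-s_N\sigma\rightsquigarrow \sup_{z\in\Gamma^\ast(\Delta)}\mathbb G_\Delta(z).
\]
This limit is a tight random variable. The set $\Gamma^\ast(\Delta)$ is nonempty and compact because $\Delta$ is continuous on the compact set $\Lambda_\varepsilon$ by Assumption~\ref{ass:basic_regular}, and $\mathbb G_\Delta$ has a version with bounded continuous paths. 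Hence $T_N-s_N\sigma=O_p(1)$ in every case.

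\emph{Interior null, $\sigma<0$.} Here $c_{1-\alpha}=0$. Since $s_N\to\infty$, we have $s_N\sigma\to-\infty$. Therefore $T_N=O_p(1)+s_N\sigma\to-\infty$ in probability, so $\Prob(T_N>0)\to0\le\alpha$.

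\emph{Boundary null, $\sigma=0$.} Here $\Gamma^\ast(\Delta)=\Gamma(\Delta)$, so $T_N\rightsquigarrow S:=\sup_{z\in\Gamma(\Delta)}\mathbb G_\Delta(z)$. The distribution of $S$ is assumed continuous at $c_{1-\alpha}$. By the portmanteau theorem, $\Prob(T_N>c_{1-\alpha})\to\Prob(S>c_{1-\alpha})$, and this is at most $\alpha$ by the definition of the quantile; it equals $\alpha$ exactly when $c_{1-\alpha}$ is a continuity point. Since under $\mathcal H_0$ one of the two cases must hold, the limit is at most $\alpha$.

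\emph{Consistency under $\mathcal H_1$, $\sigma>0$.} The critical value is a fixed finite number, determined by whichever null configuration is used for calibration. Now $T_N=s_N\sigma+O_p(1)$ with $s_N\sigma\to\infty$, so $\Prob(T_N>c_{1-\alpha})\to1$. Note that $\Gamma^\ast(\Delta)$ is still nonempty, so the $O_p(1)$ term is controlled by the same theorem.

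The main obstacle is interpreting the critical value rather than carrying out the limit arguments. The statement is an oracle one: $c_{1-\alpha}$ depends on the true $\Delta$ through its contact set. I would therefore state explicitly that under $\mathcal H_1$ the relevant $c_{1-\alpha}$ is any finite constant, or any sequence bounded in probability, such as a contact-set quantile. The argument above only uses finiteness.

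A second point to handle in the boundary case is degeneracy. If $\Gamma(\Delta)$ contains points where $\Sigma_\Delta(z,z)=0$, then $S$ may have an atom. This is exactly why the continuity assumption at $c_{1-\alpha}$ is imposed, and I would invoke that assumption there rather than try to prove continuity.
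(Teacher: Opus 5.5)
Your proposal is correct and follows essentially the same route as the paper. Both arguments obtain $T_N=s_N\sup_{z\in\Lambda_\varepsilon}\Delta(z)+O_p(1)$ from Theorem~\ref{thm:two_sample_sup_limit}, treat the boundary null via $\Gamma^\ast(\Delta)=\Gamma(\Delta)$ and the continuity assumption at $c_{1-\alpha}$, treat the strict null via $c_{1-\alpha}=0$ and $s_N\sup\Delta\to-\infty$, and obtain consistency from $s_N\eta\to\infty$ against a finite oracle critical value. Your explicit note that $c_{1-\alpha}$ is only defined under the null, so that the consistency argument uses nothing beyond its finiteness, makes precise a point the paper leaves implicit.
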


\begin{remark}[Feasible size control]
\label{rem:feasible_size}
Theorem~\ref{thm:oracle_test_validity} states size control for the oracle
critical value \(c_{1-\alpha}\). In practice, the critical value is estimated by
the multiplier or bootstrap procedures of Section~\ref{sec:critical_values},
whose consistency is established in
Theorems~\ref{thm:multiplier} and~\ref{thm:bootstrap_cv_consistency}.
Combining the two results, under \(\mathcal{H}_0\),
\[
    \limsup_{n_1,n_2\to\infty}
    \Prob\left\{T_N>\widehat c_{1-\alpha}^{(m)}\right\}
    \le \alpha,
\]
with equality along the boundary null
\(\sup_{z\in\Lambda_\varepsilon}\Delta(z)=0\) when the limiting distribution of
\(\sup_{z\in\Gamma(\Delta)}\mathbb G_\Delta(z)\) is continuous at
\(c_{1-\alpha}\). The analogous statement holds for \(\widehat c_{1-\alpha}^{*}\).
\end{remark}
The preceding results concern a prespecified direction.  In applications,
however, we test both directions so that a dominance conclusion requires not
only failure to reject the proposed ordering but also rejection of the reverse
ordering. For
\(j,k\in\{1,2\}\), \(j\ne k\), define
\[
    \Delta_{12}(z)=\Delta(z),
    \qquad
    \Delta_{21}(z)=-\Delta(z),
\]
and, correspondingly,
\[
    \widehat\Delta_{12}(z)=\widehat\Delta(z),
    \qquad
    \widehat\Delta_{21}(z)=-\widehat\Delta(z).
\]
The two directional null hypotheses are
\[
    \mathcal H_{jk}:
    \Delta_{jk}(z)\leq0
    \quad\text{for all }z\in\Lambda_\varepsilon,
\]
with test statistics
\[
    T_{jk,N}
    =
    s_N\sup_{z\in\Lambda_\varepsilon}
    \widehat\Delta_{jk}(z).
\]
Let \(\widehat c_{jk,1-\alpha}\) denote the corresponding feasible multiplier
or bootstrap critical value.  The two-direction rule declares population 1
to dominate population 2 when the event
\[
    \mathcal D_{12,N}
    =
    \left\{
    T_{12,N}\leq\widehat c_{12,1-\alpha},
    \quad
    T_{21,N}>\widehat c_{21,1-\alpha}
    \right\},
\]
occurs, and defines \(\mathcal D_{21,N}\) analogously.

The following result follows directly from
\[
    \mathcal D_{12,N}
    \subseteq
    \{T_{21,N}>\widehat c_{21,1-\alpha}\}
\]
and
\[
    \Prob(\mathcal D_{12,N})
    \geq
    1
    -
    \Prob(T_{12,N}>\widehat c_{12,1-\alpha})
    -
    \Prob(T_{21,N}\leq\widehat c_{21,1-\alpha}),
\]
together with directional size control and fixed-alternative consistency.  No
independence assumption between the two directional tests is required.

\begin{corollary}[Two-direction operating characteristics]
\label{cor:two_direction_rule}
Suppose that both directional tests satisfy feasible size control and
fixed-alternative consistency.

\begin{enumerate}
    \item If \(\mathcal H_{21}\) holds, then
    \[
        \limsup_{N\to\infty}
        \Prob(\mathcal D_{12,N})
        \leq\alpha.
    \]

    \item If \(\mathcal H_{12}\) holds and \(\mathcal H_{21}\) is false under
    a fixed alternative, then
    \[
        \liminf_{N\to\infty}
        \Prob(\mathcal D_{12,N})
        \geq1-\alpha.
    \]
    If, in addition,
    \[
        \sup_{z\in\Lambda_\varepsilon}\Delta_{12}(z)\leq-\eta
    \]
    for some \(\eta>0\), then
    \[
        \Prob(\mathcal D_{12,N})\longrightarrow1.
    \]

    \item If both \(\mathcal H_{12}\) and \(\mathcal H_{21}\) are false under
    fixed alternatives, then both directional tests reject with probability
    approaching one and
    \[
        \Prob(
        \mathcal D_{12,N}\cup\mathcal D_{21,N}
        )
        \longrightarrow0.
    \]
\end{enumerate}

The conclusions hold symmetrically after reversing the population labels.
If Holm-adjusted rejection decisions are used to define
\(\mathcal D_{jk,N}\) across a finite family of directional tests, the
probability of making any directional declaration for which the reverse null
is true is asymptotically bounded by the familywise level \(\alpha\).  In
particular, under equality, the probability of declaring either direction is
asymptotically bounded by \(\alpha\).
\end{corollary}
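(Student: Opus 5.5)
The proof rests on two elementary facts: the set inclusion $\mathcal D_{12,N}\subseteq\{T_{21,N}>\widehat c_{21,1-\alpha}\}$, and the union bound
\[
\Prob(\mathcal D_{12,N})\ge 1-\Prob(T_{12,N}>\widehat c_{12,1-\alpha})-\Prob(T_{21,N}\le\widehat c_{21,1-\alpha}).
\]
Both are stated just before the corollary. Each part then reduces to the maintained directional properties. Feasible size control means $\limsup\Prob(T_{jk,N}>\widehat c_{jk,1-\alpha})\le\alpha$ under $\mathcal H_{jk}$; this is Remark~\ref{rem:feasible_size}, and it applies to the reversed direction because $\widehat\Delta_{21}=-\widehat\Delta$ has Gaussian limit $-\mathbb G_\Delta$ by Theorem~\ref{thm:two_sample_sup_limit}. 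Fixed-alternative consistency means $\Prob(T_{jk,N}>\widehat c_{jk,1-\alpha})\to1$ when $\mathcal H_{jk}$ is false.

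For part 1, under $\mathcal H_{21}$ the inclusion gives $\Prob(\mathcal D_{12,N})\le\Prob(T_{21,N}>\widehat c_{21,1-\alpha})$, and taking $\limsup$ together with size control gives the bound $\alpha$.

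For part 2, size control of the first test bounds the second term of the union bound by $\alpha$ in the limit. Consistency of the reverse test sends the third term to zero. Hence $\liminf\Prob(\mathcal D_{12,N})\ge1-\alpha$.

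For the strengthened claim, I would show $\Prob(T_{12,N}>\widehat c_{12,1-\alpha})\to0$ when $\sup\Delta_{12}\le-\eta$, and this is the main obstacle. Theorem~\ref{thm:two_sample_sup_limit} gives $\sup_z|\widehat\Delta-\Delta|=O_p(s_N^{-1})$. Therefore
\[
T_{12,N}\le s_N(-\eta)+O_p(1)\to-\infty
\]
in probability. It then suffices that $\widehat c_{12,1-\alpha}$ is bounded below in probability. This holds because multiplier critical values are quantiles of suprema of processes with tight limits, and in the interior case they are at least $0$ or converge to $c_{1-\alpha}=0$. If the feasible critical value could drift to $-\infty$, I would instead appeal to the construction in Section~\ref{sec:critical_values}, whose estimated contact set is empty under strict inequality, so the critical value is set to $0$ or a fixed nonnegative constant.

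For part 3, consistency of both tests gives $\Prob(T_{12,N}\le\widehat c_{12,1-\alpha})\to0$ and $\Prob(T_{21,N}\le\widehat c_{21,1-\alpha})\to0$. Since $\mathcal D_{12,N}\subseteq\{T_{12,N}\le\widehat c_{12,1-\alpha}\}$ and $\mathcal D_{21,N}\subseteq\{T_{21,N}\le\widehat c_{21,1-\alpha}\}$, a union bound gives the result. Symmetry follows by relabeling the populations.

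For the Holm statement, every directional declaration $\mathcal D_{jk,N}$ requires rejecting the reverse null $\mathcal H_{kj}$. A declaration whose reverse null is true is therefore a false rejection within the family. Holm's procedure bounds the familywise error probability by the sum-of-levels argument over true nulls, asymptotically using each test's size control, so this probability is at most $\alpha$ in the limit. Under equality $\Delta\equiv0$ both $\mathcal H_{12}$ and $\mathcal H_{21}$ are true, so declaring either direction is such an error and is bounded by $\alpha$.
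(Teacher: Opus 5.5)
Your proposal is correct and follows the same route as the paper. The paper justifies the corollary only by the inclusion $\mathcal D_{12,N}\subseteq\{T_{21,N}>\widehat c_{21,1-\alpha}\}$ and the union lower bound, combined with directional size control and fixed-alternative consistency. Your additions fill in what the paper leaves implicit rather than changing the argument: the strict-null step, via $T_{12,N}\to-\infty$ and $\Prob(\widehat\Gamma_N=\varnothing)\to1$ forcing the feasible critical value to $0$; the reverse inclusion used in part 3; and the step-down argument for Holm, which strictly needs asymptotic validity of the directional $p$-values at the reduced levels $\alpha/m_0$, not just at $\alpha$.
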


The fixed-alternative qualification is important.  Under
\(s_N^{-1}\)-local deviations from the boundary, the corresponding
classification probabilities are governed by
Remark~\ref{rem:local_power} and need not converge to either zero or one.

\begin{remark}[Discretized evaluation region]
\label{rem:grid}
Let \(\Lambda_N\subseteq\Lambda_\varepsilon\) be a
\(J_N\times J_N\) grid with mesh size
\(h_N=\sup_{z\in\Lambda_\varepsilon}d(z,\Lambda_N)\). Suppose that the relevant
maximizers are interior, \(\Delta\) is twice continuously differentiable with
bounded Hessian near them, and the contrast and resampling processes are
stochastically equicontinuous at scale \(h_N\). If \(s_Nh_N^2\to0\), then
\[
    s_N\left\{
    \sup_{z\in\Lambda_\varepsilon}\widehat\Delta(z)
    -\max_{z\in\Lambda_N}\widehat\Delta(z)
    \right\}=o_p(1),
\]
so grid evaluation does not affect the first-order limiting distribution or
resampling calibration. For a regular grid, \(h_N=O(J_N^{-1})\), so it suffices
that \(s_N/J_N^2\to0\), or equivalently \(J_N/N^{1/4}\to\infty\) when
\(s_N\asymp\sqrt N\). Under standard quantile regularity conditions, the same
mesh order holds in probability for the pooled empirical-quantile grid.

The simulations and application use \(J=25\) as a finite-sample choice. For
all reported sample sizes and pairwise comparisons, \(s_N/25^2\le0.036\).
Nevertheless, a grid held fixed as \(N\to\infty\) need not be asymptotically
equivalent to the continuum and may miss violations between grid points.
\end{remark}

\section{Choice of Critical Values}
\label{sec:critical_values}
\subsection{Multiplier Critical Values}
\label{subsec:multiplier_cv}

Throughout this section, the notation \(\rightsquigarrow_\xi\) (respectively
\(\rightsquigarrow_B\)) denotes weak convergence of a random element generated
by the multiplier variables \(\{\xi_{gi}\}\) (respectively the bootstrap weights
\(\{B_{gi}^{*}\}\)), conditional on the observed data, in probability.

The limiting distribution in Theorem~\ref{thm:two_sample_sup_limit} depends on
the unknown contact set and the unknown covariance structure of
\(\mathbb G_\Delta\). We approximate it by an analytic influence-function
(AIF) multiplier procedure that simulates the limiting Gaussian process using
estimated influence functions. We refer to this procedure as the AIF method
below.

Let \(\widehat\ell_g\) denote the AIC-selected copula family for
population \(g\), and write
\[
    \widehat\theta_g
    =
    \widehat\theta_{g\widehat\ell_g}.
\]
For the selected family, define the score
\[
    s_{g}(u,v;\theta)
    =
    \partial_{\theta}
    \log c_{\widehat\ell_g}(u,v;\theta),
\]
and let
\[
    \widehat s_{gi}
    =
    s_g(\widehat U_{gi},\widehat V_{gi};
    \widehat\theta_g).
\]
The estimated sensitivity matrix is
\[
    \widehat J_g
    =
    -
    \frac{1}{n_g}
    \sum_{j=1}^{n_g}
    \partial_\theta
    s_g(\widehat U_{gj},\widehat V_{gj};
    \widehat\theta_g).
\]

To account for estimation of the two marginal distributions, define
the observation-specific correction terms
\[
\begin{aligned}
    \widehat r_{gX,i}
    &=
    \frac{1}{n_g}
    \sum_{j=1}^{n_g}
    \partial_u
    s_g(\widehat U_{gj},\widehat V_{gj};
    \widehat\theta_g)
    \left\{
    \mathbf 1(\widehat U_{gi}\le\widehat U_{gj})
    -
    \widehat U_{gj}
    \right\},\\
    \widehat r_{gY,i}
    &=
    \frac{1}{n_g}
    \sum_{j=1}^{n_g}
    \partial_v
    s_g(\widehat U_{gj},\widehat V_{gj};
    \widehat\theta_g)
    \left\{
    \mathbf 1(\widehat V_{gi}\le\widehat V_{gj})
    -
    \widehat V_{gj}
    \right\}.
\end{aligned}
\]
Let
\[
    \widehat e_{gi}
    =
    \widehat s_{gi}
    +
    \widehat r_{gX,i}
    +
    \widehat r_{gY,i},
    \qquad
    \overline e_g
    =
    \frac{1}{n_g}
    \sum_{j=1}^{n_g}\widehat e_{gj}.
\]
The plug-in influence-function estimate is
\[
    \widehat\psi_{gi}
    =
    \widehat J_g^{-1}
    \left(
    \widehat e_{gi}-\overline e_g
    \right).
    \label{eq:theta_if_plugin}
\]
For the independence copula, which has no estimated dependence
parameter, the corresponding parameter influence term is absent.
The score derivatives and the sensitivity matrix are evaluated
numerically when closed-form expressions are unavailable.

Using \(\widehat\psi_{gi}\), define
\[
\begin{aligned}
    \widehat\phi_{g,z}(Z_{gi})
    =
    &\ \widehat C_{g,uu}(z)
    \left\{\mathbf 1(X_{gi}\le x)-\widehat F_{gX}(x)\right\}
    \\
    &+
    \widehat C_{g,uv}(z)
    \left\{\mathbf 1(Y_{gi}\le y)-\widehat F_{gY}(y)\right\}
    \\
    &+
    \widehat C_{g,u\theta}(z)^\top
    \widehat\psi_{gi},
\end{aligned}
\]
where the copula derivatives are evaluated at
\[
    \left(
    \widehat F_{gX}(x),
    \widehat F_{gY}(y);
    \widehat\theta_{g\widehat\ell_g}
    \right).
\]

Let \(\{\xi_{gi}\}\) be independent multipliers, independent of the data, with
\(E(\xi_{gi})=0\), \(E(\xi_{gi}^2)=1\), and finite \(2+\delta\) moment for some
\(\delta>0\). Define
\[
    \widehat{\mathbb G}_N^{(m)}(z)
    =
    \sqrt{1-\widehat\lambda}
    \frac{1}{\sqrt{n_1}}
    \sum_{i=1}^{n_1}
    \xi_{1i}\widehat\phi_{1,z}(Z_{1i})
    -
    \sqrt{\widehat\lambda}
    \frac{1}{\sqrt{n_2}}
    \sum_{i=1}^{n_2}
    \xi_{2i}\widehat\phi_{2,z}(Z_{2i}),
\]
where \(\widehat\lambda=n_1/(n_1+n_2)\).

In the simulations and empirical application, we use i.i.d. Mammen
two-point multipliers \citep{Mammen1993},
\[
\xi=
\begin{cases}
(1-\sqrt{5})/2,
& \text{with probability }(\sqrt{5}+1)/(2\sqrt{5}),\\
(1+\sqrt{5})/2,
& \text{with probability }(\sqrt{5}-1)/(2\sqrt{5}),
\end{cases}
\]
such that $E[\xi]=0$, $E[\xi^2]=1$, and $E[\xi^3]=1$. The multipliers are generated independently across observations,
populations, and resampling draws.

Let \(a_N\to\infty\) and \(a_N/s_N\to0\). Estimate the contact set by
\begin{equation}
\label{eq:contact_set}
    \widehat\Gamma_N
    =
    \left\{
    z\in\Lambda_\varepsilon:
    s_N\widehat\Delta(z)>-a_N
    \right\}.
\end{equation}
The multiplier statistic is
\[
    T_N^{(m)}
    =
    \sup_{z\in\widehat\Gamma_N}
    \widehat{\mathbb G}_N^{(m)}(z),
\]
with the convention that \(T_N^{(m)}=0\) if
\(\widehat\Gamma_N=\varnothing\). Let \(\widehat c_{1-\alpha}^{\,(m)}\) be the
conditional \((1-\alpha)\)-quantile of \(T_N^{(m)}\) given the data.

\begin{theorem}[Consistency of multiplier critical values]
\label{thm:multiplier}
Suppose Assumptions~\ref{ass:basic_regular} and
\ref{ass:copula_first_stage} hold. Let the multipliers be independent of the
data, have mean zero, variance one, and finite \(2+\delta\) moment for some
\(\delta>0\). Under the boundary null,
\[
    T_N^{(m)}
    \rightsquigarrow_\xi
    \sup_{z\in\Gamma(\Delta)}
    \mathbb G_\Delta(z)
    \quad
    \text{in probability}.
\]

If the distribution of
\(\sup_{z\in\Gamma(\Delta)}\mathbb G_\Delta(z)\) is continuous at its
\((1-\alpha)\)-quantile \(c_{1-\alpha}\), then
\[
    \widehat c_{1-\alpha}^{\,(m)}
    \overset{p}{\longrightarrow}
    c_{1-\alpha}.
\]
\end{theorem}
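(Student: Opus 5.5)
The argument splits into three steps: a conditional multiplier central limit theorem for \(\widehat{\mathbb G}_N^{(m)}\), consistency of \(\widehat\Gamma_N\) for \(\Gamma(\Delta)\) in a Hausdorff-type sense, and a continuous-mapping step for the supremum over the estimated set. Quantile consistency then follows from continuity of the limit law at \(c_{1-\alpha}\).

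\emph{Step 1: oracle multiplier process.} Define the oracle process \(\mathbb G_N^{\circ}(z)\) by replacing \(\widehat\phi_{g,z}\) with the true \(\phi_{g,z}\) and \(\widehat\lambda\) with \(\lambda\). The class \(\{\phi_{g,z}:z\in\Lambda_\varepsilon\}\) is \(P\)-Donsker, as noted after Theorem~\ref{thm:linear_h}. Each function is a bounded, uniformly continuous weight times a VC indicator, plus a finite-dimensional linear term in \(\psi_g\), which is square integrable. The envelope is square integrable. The conditional multiplier CLT (van der Vaart and Wellner, Thm.~2.9.6/2.9.7) then gives \(n_g^{-1/2}\sum_i\xi_{gi}\phi_{g,z}(Z_{gi})\rightsquigarrow_\xi\mathbb G_g\) in probability. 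This requires only mean zero, unit variance and \(\|\xi\|_{2,1}<\infty\), which the \(2+\delta\) moment implies. Independence of the two samples and of the multipliers gives \(\mathbb G_N^{\circ}\rightsquigarrow_\xi\mathbb G_\Delta\).

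\emph{Step 2: estimated influence functions.} We show \(\sup_z|\widehat{\mathbb G}_N^{(m)}(z)-\mathbb G_N^{\circ}(z)|=o_p(1)\) unconditionally; since the multipliers have conditional mean zero, this yields conditional negligibility in probability. By Glivenko--Cantelli and consistency of \(\widehat\theta_g\), the plug-in weights \(\widehat C_{g,uu},\widehat C_{g,uv},\widehat C_{g,u\theta}\) are uniformly consistent on \(\Lambda_\varepsilon\). This uses the uniform continuity in Assumption~\ref{ass:basic_regular} and the fact that AIC selects \(\ell_g^\ast\) with probability tending to one, as in the proof of Lemma~\ref{lem:oracle_equivalence}. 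A sup-weight difference multiplied by \(\sup_z|n_g^{-1/2}\sum_i\xi_{gi}\{\mathbf 1(X_{gi}\le x)-F_{gX}(x)\}|=O_p(1)\) is \(o_p(1)\). Replacing \(F_{gX}\) by \(\widehat F_{gX}\) contributes \(|\bar\xi_g\sqrt{n_g}\,(\widehat F_{gX}-F_{gX})|=O_p(n_g^{-1/2})\).

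\textbf{Main obstacle.} The parameter term requires \(n_g^{-1}\sum_i\|\widehat\psi_{gi}-\psi_g(Z_{gi})\|^2=o_p(1)\). Given this, \(\|n_g^{-1/2}\sum_i\xi_{gi}(\widehat\psi_{gi}-\psi_{gi})\|\) has conditional second moment equal to that average, so it is \(o_p(1)\). I would first prove \(\widehat J_g\to J_g\) by a uniform law of large numbers for the Hessian on a neighbourhood of \(\theta_g\), using the dominating envelopes. The correction terms \(\widehat r_{gX,i}\) are V-statistic-type plug-ins of \(r_{gX}(U_{gi})\). Their uniform consistency follows from the Donsker/Glivenko--Cantelli property of \(\{\partial_u s_g(u,v)\{\mathbf 1(u_0\le u)-u\}:u_0\}\), continuity of \(\partial_u s_g\) under marginal perturbation, and \(\sup_i|\widehat U_{gi}-U_{gi}|\to0\). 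The centring by \(\overline e_g\) is harmless.

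\emph{Step 3: contact set and supremum.} Under the boundary null, \(\Gamma(\Delta)\subseteq\widehat\Gamma_N\) with probability tending to one. Indeed, \(\sup_z s_N|\widehat\Delta-\Delta|=O_p(1)\) by Theorem~\ref{thm:two_sample_sup_limit}, and \(a_N\to\infty\). Conversely, every \(z\in\widehat\Gamma_N\) satisfies \(\Delta(z)>-(a_N+O_p(1))/s_N\to0\). By continuity of \(\Delta\) on the compact set \(\Lambda_\varepsilon\), \(\widehat\Gamma_N\subseteq\Gamma^{\delta_N}(\Delta)\), the \(\delta_N\)-enlargement, for some \(\delta_N\to0\) in probability. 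Hence
\[
\sup_{\Gamma(\Delta)}\widehat{\mathbb G}_N^{(m)}\le T_N^{(m)}\le\sup_{\Gamma^{\delta_N}(\Delta)}\widehat{\mathbb G}_N^{(m)}.
\]
The upper and lower bounds differ by at most the modulus of continuity of \(\widehat{\mathbb G}_N^{(m)}\) at scale \(\delta_N\). By asymptotic equicontinuity, which follows from the conditional weak convergence to the uniformly continuous \(\mathbb G_\Delta\), this difference is \(o_p(1)\). The map \(g\mapsto\sup_{\Gamma(\Delta)}g\) is Lipschitz, so bounded-Lipschitz distance arguments give \(T_N^{(m)}\rightsquigarrow_\xi\sup_{\Gamma(\Delta)}\mathbb G_\Delta\).

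\emph{Step 4: quantiles.} Convergence of the conditional distribution functions at the continuity point \(c_{1-\alpha}\), in probability, gives \(\widehat c^{\,(m)}_{1-\alpha}\to_p c_{1-\alpha}\). This uses the standard quantile lemma (e.g.\ Lemma 10.11 of Kosorok, 2008), which needs strict increase of the limit law at \(c_{1-\alpha}\); that follows because the supremum of a centred Gaussian process has a continuous, strictly increasing distribution function above its left endpoint.
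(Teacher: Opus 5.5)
Your proposal is correct and follows essentially the same route as the paper's proof. It establishes plug-in consistency of $\widehat\psi_{gi}$ and the derivative weights, applies a conditional multiplier CLT to the infeasible process, and then replaces $\phi_{g,z}$ by $\widehat\phi_{g,z}$ via the same coefficient, centering and conditional-variance decomposition. It then uses the sandwich $\Gamma(\Delta)\subseteq\widehat\Gamma_N\subseteq$ (small enlargement of $\Gamma(\Delta)$) with asymptotic equicontinuity, and finishes with quantile convergence.

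The differences are only refinements. You use a data-dependent shrinking radius $\delta_N$ instead of a fixed $\eta\downarrow 0$. You also note explicitly that quantile consistency needs the limit law to be strictly increasing at $c_{1-\alpha}$, which holds for suprema of centred Gaussian processes above the left endpoint of the support; the paper's appeal to continuity at $c_{1-\alpha}$ leaves this implicit.
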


\begin{remark}[Nonlinear AIF implementation]
\label{rem:nonlinear}
Theorem~\ref{thm:multiplier} establishes the first-order validity of the
influence-function multiplier procedure under the maintained
unique-best-family condition. In finite samples, however, several candidate
copula families may receive similar empirical support, so variation in the
model-selection step can remain non-negligible. To partially accommodate this
additional source of finite-sample uncertainty, our implementation augments the
first-order multiplier approximation with a nonlinear perturbation of the AIC
weights.

For each multiplier draw, we perturb both the candidate conditional CDFs by
their estimated influence functions and the candidate log-likelihoods by their
centered observation-level contributions, and then recompute the AIC weights
through the exact softmax map. In compact form,
\begin{align*}
    \widehat H_g^{(b)}(z)
    &=
    \sum_{\ell=1}^{Q}\widehat w_{g\ell}^{(b)}
    \left\{\widehat h_{g\ell}(z)
    +\frac{1}{n_g}\sum_{i=1}^{n_g}
    \xi_{gi}^{(b)}\widehat\phi_{g\ell,z}(Z_{gi})\right\},
    \\
    \widehat w_{g\ell}^{(b)}
    &\propto
    \exp\!\left\{-\frac{1}{2}\operatorname{AIC}_{g\ell}
    +\sum_{i=1}^{n_g}\xi_{gi}^{(b)}\widehat q_{g\ell i}\right\},
\end{align*}
where \(\widehat q_{g\ell i}\) is the centered fitted log-density contribution.
Thus, the relative empirical support for the candidate families is allowed to
vary across multiplier draws, rather than being fixed at its full-sample value.

This nonlinear perturbation is a finite-sample stabilization device rather
than an additional first-order ingredient of the asymptotic theory. Under the
maintained unique-best-family condition, the AIC weight on the limiting family
converges to one, while the weights on the remaining families vanish
exponentially. The nonlinear weight perturbation is therefore asymptotically
inactive at first order and does not alter either the oracle theory or the
validity result in Theorem~\ref{thm:multiplier}.
\end{remark}

\subsection{Bootstrap Critical Values}
\label{subsec:bootstrap_cv}

The multiplier critical value in Section~\ref{subsec:multiplier_cv} is based on the
estimated influence-function representation and therefore avoids re-estimating the
copula parameters. As an alternative, we also consider a full weighted bootstrap
procedure that recomputes the marginal distributions, copula parameters, AIC
weights, and the conditional distribution estimator in each bootstrap replication.
To avoid confusion with the multiplier variables \(\xi_{gi}\) used above, we denote
the bootstrap frequency weights by \(B_{gi}^{*}\).

We impose the following condition on the bootstrap weights.

\begin{assumption}[Bootstrap weights]
\label{ass:bootstrap_weights}
For each population \(g\in\{1,2\}\), let
\(B_g^{*}=(B_{g1}^{*},\ldots,B_{gn_g}^{*})^\top\) be generated independently of the
data and independently across populations. The vector \(B_g^{*}\) is exchangeable,
\(B_{gi}^{*}\ge 0\), and \(\sum_{i=1}^{n_g}B_{gi}^{*}=n_g\). Moreover, for a generic
component \(B_{g1}^{*}\),
\[
    \limsup_{n_g\to\infty}\|B_{g1}^{*}\|_{2,1}<\infty,
    \qquad
    \lim_{\lambda\to\infty}\limsup_{n_g\to\infty}
    \sup_{t\ge \lambda}t^2\Prob(B_{g1}^{*}>t)=0,
\]
where \(\|B_{g1}^{*}\|_{2,1}=\int_0^\infty \Prob(B_{g1}^{*}\ge u)^{1/2}\,du\). Finally,
\[
    \frac{1}{n_g}\sum_{i=1}^{n_g}(B_{gi}^{*}-1)^2
    \overset{p}{\longrightarrow}1 .
\]
These conditions are satisfied, for example, by the multinomial bootstrap weights
\[
    (B_{g1}^{*},\ldots,B_{gn_g}^{*})
    \sim
    \mathrm{Multinomial}\{n_g;(1/n_g,\ldots,1/n_g)\}.
\]
\end{assumption}

For each bootstrap replication, compute the weighted empirical marginal
distribution functions
\[
    \widehat F_{gX}^{*}(x)
    =
    \frac{1}{n_g+1}\sum_{i=1}^{n_g}B_{gi}^{*}1(X_{gi}\le x),
    \qquad
    \widehat F_{gY}^{*}(y)
    =
    \frac{1}{n_g+1}\sum_{i=1}^{n_g}B_{gi}^{*}1(Y_{gi}\le y).
\]
The corresponding bootstrap pseudo-observations are
\[
    \widehat U_{gi}^{*}=\widehat F_{gX}^{*}(X_{gi}),
    \qquad
    \widehat V_{gi}^{*}=\widehat F_{gY}^{*}(Y_{gi}).
\]
For each candidate copula family \(C_\ell(\cdot,\cdot;\theta_\ell)\), define the
normalized weighted pseudo-likelihood criterion
\[
    \widehat M_{g\ell}^{*}(\theta)
    =
    \frac{1}{n_g}
    \sum_{i=1}^{n_g}
    B_{gi}^{*}
    \log c_\ell(\widehat U_{gi}^{*},\widehat V_{gi}^{*};\theta)
\]
and its maximizer
\[
    \widehat\theta_{g\ell}^{*}
    =
    \arg\max_{\theta_\ell\in\Theta_\ell}
    \widehat M_{g\ell}^{*}(\theta_\ell).
\]
The bootstrap AIC criterion is
\[
    AIC_{g\ell}^{*}
    =
    -2\sum_{i=1}^{n_g}
    B_{gi}^{*}
    \log c_\ell(\widehat U_{gi}^{*},\widehat V_{gi}^{*};\widehat\theta_{g\ell}^{*})
    +2k_\ell,
\]
and the corresponding bootstrap AIC weights are
\[
    \widehat w_{g\ell}^{*}
    =
    \frac{
    \exp[-\{AIC_{g\ell}^{*}-\min_{1\le j\le Q}AIC_{gj}^{*}\}/2]
    }{
    \sum_{m=1}^Q
    \exp[-\{AIC_{gm}^{*}-\min_{1\le j\le Q}AIC_{gj}^{*}\}/2]
    } .
\]
The bootstrap estimator of the pointwise conditional distribution is
\[
    \widehat H_g^{*}(y\mid x)
    =
    \sum_{\ell=1}^Q
    \widehat w_{g\ell}^{*}
    \partial_u C_\ell
    \{\widehat F_{gX}^{*}(x),\widehat F_{gY}^{*}(y);
    \widehat\theta_{g\ell}^{*}\}.
\]

\begin{assumption}[Bootstrap criterion regularity]
\label{ass:bootstrap_criterion}
For each population \(g\in\{1,2\}\), conditionally on the observed data and in
probability,
\[
    \max_{1\le\ell\le Q}
    \sup_{\theta\in\Theta_\ell}
    \left|
    \widehat M_{g\ell}^{*}(\theta)
    -
    \widehat M_{g\ell}(\theta)
    \right|
    =
    o_p^{*}(1).
\]
\end{assumption}

\begin{remark}[Regularity of the bootstrap criterion]
Assumption~\ref{ass:bootstrap_criterion} is the conditional bootstrap analogue
of the uniform convergence requirement in
Assumption~\ref{ass:copula_first_stage}. It follows from the standard uniform
law for exchangeably weighted empirical processes and stochastic
equicontinuity of the smooth pseudo-likelihood criteria under empirical-margin
perturbations \citep{PraestgaardWellner1993Exchangeably,Tsukahara2005}. With a
fixed finite candidate set, compact parameter spaces away from singular
parameter values, and the usual integrable-envelope conditions for rank-based
canonical maximum pseudo-likelihood, this is a standard regularity condition
for the commonly used parametric copula families.
\end{remark}

Define the bootstrap contrast
\[
    \widehat\Delta^{*}(x,y)
    =
    \widehat H_1^{*}(y\mid x)-\widehat H_2^{*}(y\mid x).
\]
Use the same estimated contact set 
in (\ref{eq:contact_set}), the bootstrap statistic is
\[
    T_N^{*}
    =
    \sup_{z\in\widehat\Gamma_N}
    s_N\{\widehat\Delta^{*}(z)-\widehat\Delta(z)\},
\]
with the convention that \(T_N^{*}=0\) if \(\widehat\Gamma_N=\emptyset\). Let
\(\widehat c_{1-\alpha}^{\,*}\) be the conditional \((1-\alpha)\)-quantile of
\(T_N^{*}\) given the data.

\begin{theorem}[Consistency of bootstrap critical values]
\label{thm:bootstrap_cv_consistency}
Suppose Assumptions~\ref{ass:basic_regular},
\ref{ass:copula_first_stage}, \ref{ass:bootstrap_weights}, and
\ref{ass:bootstrap_criterion} hold. Under the boundary null
\(\sup_{z\in\Lambda_\varepsilon}\Delta(z)=0\),

\[
    T_N^{*}
    \rightsquigarrow_{B}
    \sup_{z\in\Gamma(\Delta)}\mathbb{G}_\Delta(z)
    \quad\text{in probability}.
\]
 If the distribution of
\(\sup_{z\in\Gamma(\Delta)}\mathbb{G}_\Delta(z)\) is continuous at its
\((1-\alpha)\)-quantile \(c_{1-\alpha}\), then
\[
    \widehat c_{1-\alpha}^{\,*}
    \overset{p}{\longrightarrow}
    c_{1-\alpha}.
\]

\end{theorem}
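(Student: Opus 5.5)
The proof has two stages. First I show that the bootstrap process $s_N\{\widehat\Delta^{*}-\widehat\Delta\}$ converges conditionally to $\mathbb G_\Delta$ in $\ell^\infty(\Lambda_\varepsilon)$. Then I combine this with consistency of the estimated contact set $\widehat\Gamma_N$ to get the supremum limit. Quantile consistency then follows from the continuity assumption by the standard argument (e.g. Lemma 21.2 of van der Vaart, applied conditionally in probability).

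\textbf{Stage 1: bootstrap linearization.} The steps follow the proofs of Lemma~\ref{lem:oracle_equivalence} and Theorem~\ref{thm:linear_h}, carried out conditionally on the data.

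First, Assumption~\ref{ass:bootstrap_criterion} combined with Assumption~\ref{ass:copula_first_stage} gives uniform convergence of $\widehat M_{g\ell}^{*}$ to $M_{g\ell}$. Hence $\widehat\theta_{g\ell}^{*}\to\theta_{g\ell}^{\ast}$ for every $\ell$.

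Second, the separation gap $\kappa_{g\ell}>0$ makes $AIC_{g\ell}^{*}-AIC_{g\ell_g^\ast}^{*}\ge n_g\kappa_{g\ell}$ with bootstrap probability tending to one. So $\widehat w_{g\ell}^{*}=O_p^{*}(e^{-cn_g})$ for $\ell\neq\ell_g^\ast$. Since the $\partial_u C_\ell$ are bounded by $1$, $\widehat H_g^{*}$ is within $o_p^{*}(n_g^{-1/2})$ of the oracle bootstrap estimator $\partial_u C_{\ell_g^\ast}\{\widehat F^{*}_{gX},\widehat F^{*}_{gY};\widehat\theta^{*}_g\}$. The same is true for $\widehat H_g$ by Lemma~\ref{lem:oracle_equivalence}.

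Third, I need bootstrap expansions for the inputs:
\begin{itemize}
\item For the weighted empirical margins, the exchangeable bootstrap CLT of Praestgaard and Wellner gives $\sqrt{n_g}(\widehat F^{*}_{gX}-\widehat F_{gX})=n_g^{-1/2}\sum_i(B^{*}_{gi}-1)\{\mathbf 1(X_{gi}\le\cdot)-F_{gX}\}+o_p^{*}(1)$, and likewise for $Y$. Assumption~\ref{ass:bootstrap_weights} supplies exactly its hypotheses.
\item For the parameter, repeat the Tsukahara-type argument behind Remark~\ref{rem:psi_g}. Take a first-order expansion of the weighted score around $\widehat\theta_g$, then expand the pseudo-observations $\widehat U^{*}_{gi}$ around $\widehat U_{gi}$. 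This gives $\sqrt{n_g}(\widehat\theta^{*}_g-\widehat\theta_g)=n_g^{-1/2}\sum_i(B^{*}_{gi}-1)\psi_g(Z_{gi})+o_p^{*}(1)$.
\end{itemize}
This step is where I expect the real work. The rank-based correction terms $r_{gX},r_{gY}$ have to reappear with the \emph{weighted} margins. That requires stochastic equicontinuity of the score class under simultaneous perturbation of the margins and of the parameter, with the Donsker and envelope conditions of Assumption~\ref{ass:copula_first_stage} transferred to the exchangeably weighted process. I would first handle the margin-perturbation terms via a Hadamard-differentiability argument, then apply the functional delta method for the bootstrap (van der Vaart and Wellner, Thm 3.9.11).

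Fourth, a mean-value expansion of $\partial_u C_{\ell_g^\ast}$, with the uniform continuity of $C_{g,uu},C_{g,uv},C_{g,u\theta}$ from Assumption~\ref{ass:basic_regular} on the trimmed domain, gives
\[
\sqrt{n_g}\{\widehat H^{*}_g-\widehat H_g\}(z)=n_g^{-1/2}\textstyle\sum_i(B^{*}_{gi}-1)\phi_{g,z}(Z_{gi})+o_p^{*}(1)
\]
uniformly in $z$. The class $\{\phi_{g,z}\}$ is Donsker by Theorem~\ref{thm:linear_h}, so the conditional limit is $\mathbb G_g$. Independence of the weights across samples and $s_N/\sqrt{n_1}\to\sqrt{1-\lambda}$, $s_N/\sqrt{n_2}\to\sqrt{\lambda}$ then yield $s_N(\widehat\Delta^{*}-\widehat\Delta)\rightsquigarrow_B\mathbb G_\Delta$.

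\textbf{Stage 2: contact set and supremum.} By Theorem~\ref{thm:two_sample_sup_limit}, $\sup_z|\widehat\Delta-\Delta|=O_p(s_N^{-1})$. Since $a_N\to\infty$ and $a_N/s_N\to0$, this gives, with probability tending to one,
\[
\Gamma(\Delta)\subseteq\widehat\Gamma_N\subseteq\Gamma_{\delta_N}=\{z:\Delta(z)\ge-2a_N/s_N\},\qquad \delta_N=2a_N/s_N\to0.
\]
By continuity of $\Delta$ and compactness of $\Lambda_\varepsilon$, $\Gamma_{\delta_N}$ shrinks to $\Gamma(\Delta)$ in Hausdorff distance. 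The bootstrap process is asymptotically uniformly equicontinuous and $\mathbb G_\Delta$ has continuous paths, so $\sup_{\Gamma_{\delta_N}}$ and $\sup_{\Gamma(\Delta)}$ of the bootstrap process differ by $o_p^{*}(1)$. Sandwiching $\sup_{\widehat\Gamma_N}$ between the two then gives $T_N^{*}\rightsquigarrow_B\sup_{\Gamma(\Delta)}\mathbb G_\Delta$, which is the same argument as in Theorem~\ref{thm:multiplier}. The convention $T_N^{*}=0$ on $\widehat\Gamma_N=\emptyset$ is irrelevant: under the boundary null $\Gamma(\Delta)\neq\emptyset$, so this event has vanishing probability.
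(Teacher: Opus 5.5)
Your proposal is correct and has the same architecture as the paper's proof. It matches the paper on five steps:
\begin{itemize}
\item exponential decay of the bootstrap AIC weights, which reduces $\widehat H_g^{*}$ to the oracle bootstrap estimator (the paper's Lemma~\ref{lem:bootstrap_oracle_equivalence});
\item a bootstrap linear expansion of $\widehat\theta_g^{*}$;
\item a Taylor step giving $n_g^{-1/2}\sum_i(B^{*}_{gi}-1)\phi_{g,z}(Z_{gi})$;
\item the exchangeable-weight CLT on the Donsker class $\{\phi_{g,z}\}$;
\item a contact-set sandwich, followed by quantile convergence.
\end{itemize}
In the weight step the paper bounds $|\widehat M^{*}_{g\ell}(\widehat\theta^{*}_{g\ell})-\widehat M_{g\ell}(\widehat\theta_{g\ell})|$ by the sup-distance in Assumption~\ref{ass:bootstrap_criterion}. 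For $\ell\neq\ell_g^\ast$ you only need these maximized values, not convergence of $\widehat\theta^{*}_{g\ell}$.

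The real difference is how you obtain the parameter expansion (Lemma~\ref{lem:bootstrap_parameter_expansion}). The paper does not linearize the weighted score by hand. It applies Theorem~1 of Cheng and Huang (2010) on the weighted bootstrap for semiparametric M-estimators, treating the margins as the nuisance parameter. It then matches Assumption~\ref{ass:bootstrap_weights} to their weight conditions. It uses the $O_p(n_g^{-1/2})$ and $O_p^{*}(n_g^{-1/2})$ rates of the empirical and weighted empirical margins for the nuisance-rate conditions. Finally, it identifies their adjusted score with $s_g+r_{gX}+r_{gY}$ by differentiating the population score map in the empirical-CDF directions.

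That buys brevity, and it outsources exactly the step you flag as hard: stochastic equicontinuity of the weighted score under simultaneous margin and parameter perturbation. The price is that the cited conditions are only verified at a high level.

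Your route expands the weighted score equation around $\widehat\theta_g$, linearizes the weighted pseudo-observations, and uses Hadamard differentiability plus the bootstrap delta method. It is more self-contained and shows directly why the rank corrections $r_{gX}$ and $r_{gY}$ reappear with weights $(B^{*}_{gi}-1)$. However, it obliges you to exhibit a norm in which the marginal perturbation is Hadamard differentiable, even though copula scores are typically unbounded near the boundary of $[0,1]^2$. Assumption~\ref{ass:copula_first_stage} grants this only at a high level, so at the paper's level of rigor the two routes stand on the same footing.

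A smaller difference is how you bracket $\widehat\Gamma_N$. You use the deterministic level sets $\{\Delta\ge-2a_N/s_N\}$, which shrink to $\Gamma(\Delta)$ in Hausdorff distance. The paper instead uses the fixed neighborhoods $\Gamma^\eta(\Delta)$ of Lemma~\ref{lem:contact_set_consistency} and then lets $\eta\downarrow0$. The two are equivalent, and yours makes the roles of $a_N\to\infty$ and $a_N/s_N\to0$ slightly more explicit.
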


\subsection{Model-selection uncertainty and computational trade-offs}
\label{subsec:selection_tradeoff}

The multiplier and bootstrap procedures provide two different ways of
approximating the same first-order limiting distribution. Under
Assumption~\ref{ass:copula_first_stage}, the candidate set contains a uniquely
best copula family that is separated from the remaining candidates in the
population criterion. As a result, the AIC weight on the limiting family
converges to one, while the weights on the other families vanish
asymptotically. Model-selection uncertainty is therefore negligible at first
order under the maintained theory. This is why both
Theorem~\ref{thm:multiplier} and
Theorem~\ref{thm:bootstrap_cv_consistency} consistently estimate the same
oracle limiting distribution.

This asymptotic equivalence does not imply that model selection is innocuous
in finite samples. When several candidate copula families fit the data
similarly well, their AIC criteria can be close, so relatively small sampling
perturbations may induce substantial changes in the selected family or in the
associated model-averaging weights. Such variation is not a separate
first-order component under the unique-best-family asymptotics, but it can
nevertheless affect finite-sample calibration.

The two resampling procedures treat this source of uncertainty differently.
The full bootstrap recomputes the empirical margins, candidate-specific copula
parameters, AIC criteria, model weights, and conditional distribution
estimator within every bootstrap replication. It therefore reproduces the
entire estimation-and-selection pipeline and allows model support to vary
endogenously across resamples. By contrast, the influence-function multiplier
procedure starts from a first-order linear approximation around the fitted
sample. The nonlinear weight perturbation in
Remark~\ref{rem:nonlinear} is designed to make this approximation responsive
to finite-sample variation in relative model support, but it does not require
full re-estimation of all candidate copula models in every multiplier draw.

The distinction creates a practical trade-off. The bootstrap more directly
propagates uncertainty generated by the complete estimation and model-selection
procedure, but this comes at a substantially higher computational cost. The
influence-function multiplier procedure is much cheaper because the
candidate-model estimates and the required influence-function components are
computed once and subsequently reused across multiplier draws. Its purpose is
therefore not to replace the full bootstrap uniformly, but to provide a
computationally efficient approximation that retains the same first-order
validity under the maintained assumptions while partially accommodating
finite-sample model-selection variability. Section~\ref{sec:simulation}
examines how these differences translate into finite-sample size, power, and
computational performance.
\section{Simulation Study}
\label{sec:simulation}

This section evaluates the finite-sample calibration, power, and computational
trade-offs of the two inference procedures discussed in
Section~\ref{subsec:selection_tradeoff}. We consider three heterogeneous-\(X\)
designs. The two independent samples have equal size,
\(n_1=n_2=n\in\{50,100,200,500\}\). Every design--sample-size cell is based
on 1,000 Monte Carlo replications at the nominal 5\% level. Both procedures
use a \(25\times25\) empirical-quantile grid, \(\varepsilon=0.02\),
\(a_N=\sqrt{\log(n_1+n_2)}\), and 999
resampling draws.

We compare the AIF multiplier procedure with the
full bootstrap, using the same canonical maximum pseudo-likelihood and
AIC model-averaged point estimator in both cases. As described in
Section~\ref{sec:critical_values}, the full bootstrap recomputes the margins,
candidate copula estimates, AIC weights, and conditional CDFs in every
resampling draw, whereas the AIF procedure uses the nonlinear multiplier
implementation in Remark~\ref{rem:nonlinear} to perturb the fitted
candidate-specific conditional CDFs and their relative model weights without
fully re-estimating all candidate models in each draw.

The candidate set comprises the independence, Gaussian, Student-\(t\),
Clayton, Gumbel, Frank, and Joe copulas, together with the
survival, \(90^\circ\)-rotated, and \(270^\circ\)-rotated versions of the
Clayton and Gumbel copulas (13 families in total). For the Student-\(t\)
copula, the degrees of freedom range over 3, 4, 5, 7, 10, 15, and 30.

\subsection{Data-generating processes}

The tuning parameter \(\delta\) controls a location shift in population 2.
At \(\delta=0\), each design is a boundary-null experiment; positive values
generate \(H_1(y\mid x)-H_2(y\mid x)>0\) on part of the evaluation region.

\begin{itemize}
\item \textbf{P-DGP 1 (correctly specified Gaussian copula).}
Let \(\rho=0.5\), \(X^{(1)}\sim N(0,1)\), and
\(X^{(2)}\sim N(0.5,1)\).  With independent standard normal errors,
\[
Y^{(1)}=\rho X^{(1)}+\sqrt{1-\rho^2}\,e^{(1)},\qquad
Y^{(2)}=\rho X^{(2)}+\delta+\sqrt{1-\rho^2}\,e^{(2)},
\]
where \(\delta\in\{0,0.1,\ldots,0.6\}\).  The two \(X\) margins differ, but
the Gaussian copula is contained in the candidate set.

\item \textbf{P-DGP 2 (correctly specified Clayton copula with heterogeneous
margins).}
Independently for \(g=1,2\), draw \((U_g,V_g)\) from a Clayton copula with
parameter \(\theta=2\).  Define
\[
q(p)=\bigl[|p-1/2|-0.4\bigr]_+^2,
\qquad
L(p)=\frac{\sqrt{3}}{\pi}\log\!\left(\frac{p}{1-p}\right),
\]
and set
\[
\begin{aligned}
X^{(1)}&=\Phi^{-1}(U_1),
&Y^{(1)}&=L(V_1)+10q(V_1),\\
X^{(2)}&=\Phi^{-1}(U_2)+10q(U_2),
&Y^{(2)}&=L(V_2)+\delta .
\end{aligned}
\]
The tail perturbation is zero on the central 80\% probability region.  The
four group-specific marginal maps are all strictly increasing on \((0,1)\):
\(\Phi^{-1}\), \(L+10q\), \(\Phi^{-1}+10q\), and \(L+\delta\).  Indeed,
\(10q'(p)\geq -2\), whereas
\((\Phi^{-1})'(p)\geq\sqrt{2\pi}>2\) and
\(L'(p)\geq 4\sqrt{3}/\pi>2\).  Hence each \((X^{(g)},Y^{(g)})\) is
obtained from its latent Clayton pair through strictly increasing, though
group-specific, marginal transformations, so its copula remains exactly
Clayton with parameter \(\theta=2\).  The perturbation creates markedly
different marginal distributions while preserving the central contact
region.  We take \(\delta\in\{0,0.1,\ldots,0.6\}\).

\item \textbf{P-DGP 3 (misspecified nonlinear stress design).}
Let \(X^{(1)}\sim U(0,1)\), \(X^{(2)}\sim\operatorname{Beta}(2,2)\),
\(\sigma=0.5\), and
\[
m(x)=2(x-0.5)+0.5(x-0.5)^2.
\]
Generate
\[
Y^{(1)}=m(X^{(1)})+\sigma e^{(1)},\qquad
Y^{(2)}=m(X^{(2)})+\delta+\sigma e^{(2)},
\]
for \(\delta\in\{0,0.1,\ldots,0.5\}\).  The conditional distribution is
identical across groups at \(\delta=0\), but the induced copula is generally
outside the finite candidate set.  This design therefore provides robustness
evidence under misspecification and is not covered by the correct-
specification condition in Assumption~\ref{ass:copula_first_stage}.
\end{itemize}

\subsection{Simulation results}

Figures~\ref{fig:power-dgp1}--\ref{fig:power-dgp3} summarize the rejection
probabilities over \(\delta\). The value at \(\delta=0\) reports empirical
size, while positive values of \(\delta\) correspond to increasingly separated
alternatives. The AIF procedure is close to the nominal level in P-DGP 1 and
P-DGP 3 and is conservative in P-DGP 2. The bootstrap shows some
over-rejection at smaller sample sizes in P-DGP 1 and P-DGP 3 and is
conservative in P-DGP 2 at \(n=500\).

Across all three designs, power increases with both \(n\) and \(\delta\),
showing that both procedures become more discriminating as the alternative
moves farther from the boundary null and as sampling information increases.
The AIF and bootstrap procedures exhibit broadly similar power patterns
throughout the simulations. In P-DGP 1, the two power curves are closely
aligned under correct copula specification. A similarly close pattern is
observed in the misspecified P-DGP 3, which provides finite-sample robustness
evidence outside the formal validity conditions of the theory. P-DGP 2 shows
somewhat more conservative rejection behavior for the AIF procedure, but the
overall power pattern remains comparable across the two methods and the
difference narrows as \(n\) and \(\delta\) increase.

\begin{figure}[p]
\centering
\includegraphics[width=\textwidth]{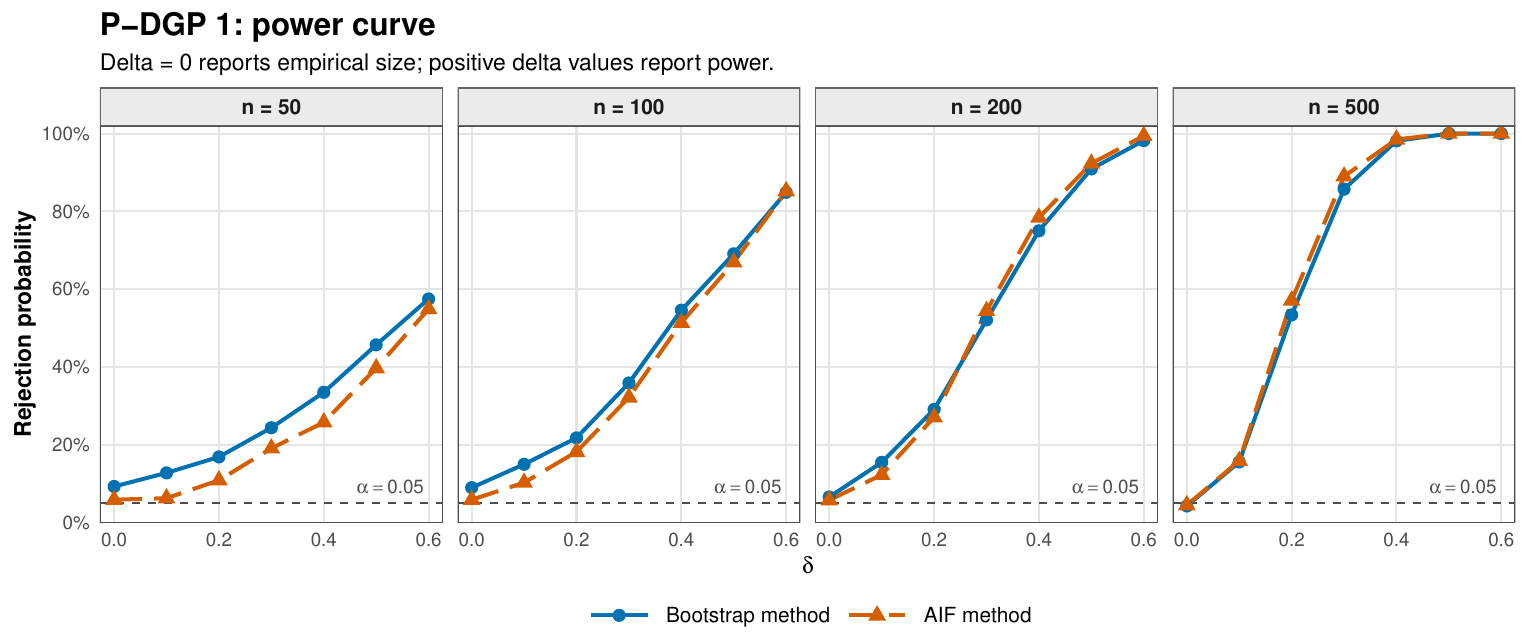}
\caption{Power curves in P-DGP 1.}
\label{fig:power-dgp1}
\end{figure}

\begin{figure}[p]
\centering
\includegraphics[width=\textwidth]{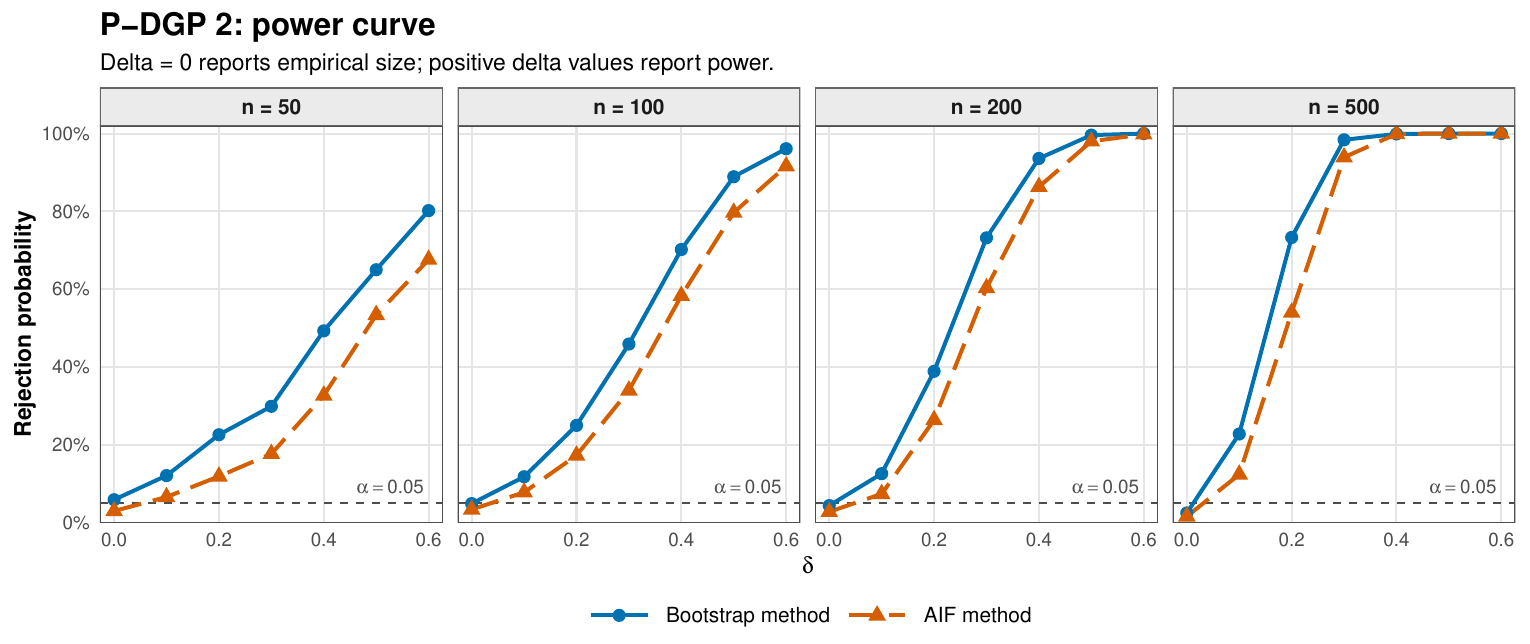}
\caption{Power curves in P-DGP 2.}
\label{fig:power-dgp2}
\end{figure}

\begin{figure}[p]
\centering
\includegraphics[width=\textwidth]{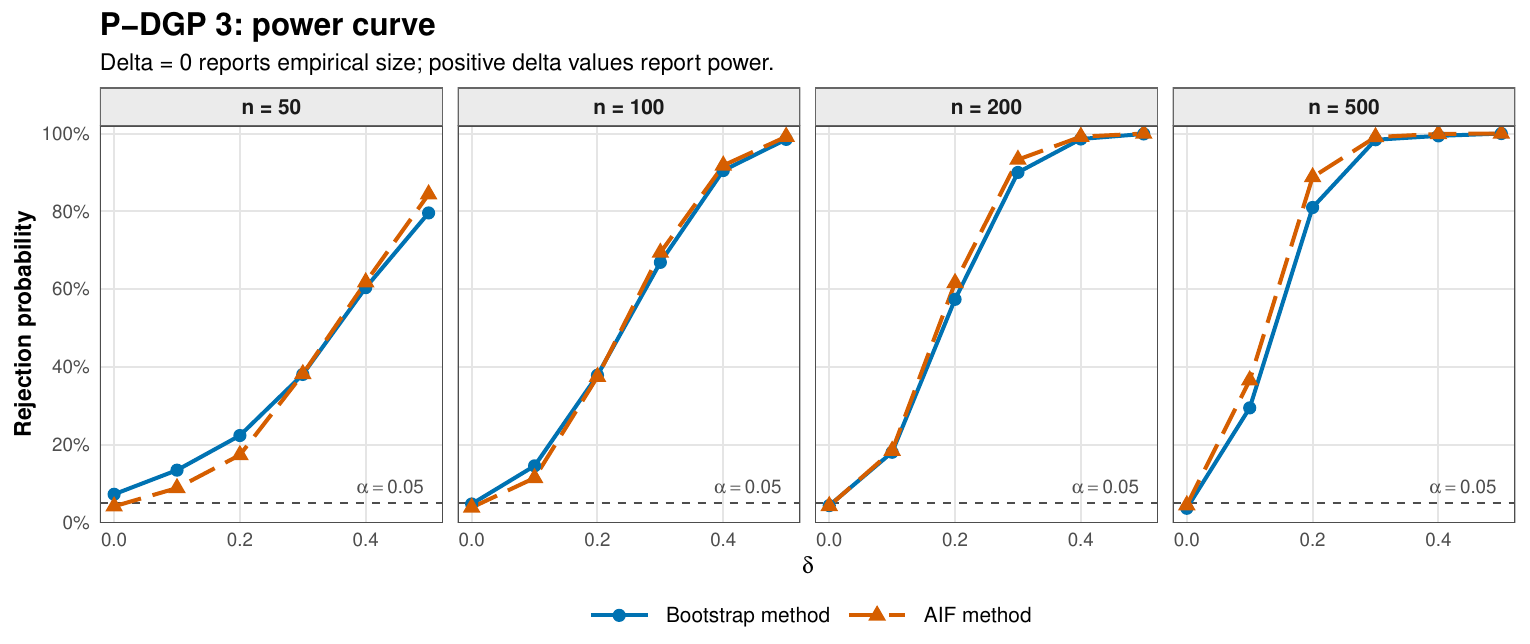}
\caption{Power curves in the misspecified P-DGP 3.}
\label{fig:power-dgp3}
\end{figure}

Table~\ref{tab:dgp1-selection-size} examines size in P-DGP 1 and includes an
additional boundary-null experiment at \(n=1000\), using the same 1,000 Monte
Carlo replications and 999 resampling draws. The bootstrap method's
small-sample over-rejection diminishes as \(n\) increases, with the rejection
rate falling from 0.093 at \(n=50\) to 0.043 at \(n=500\) and 0.042 at
\(n=1000\). The AIF procedure remains closer to the nominal 5\% level at the
smaller sample sizes, with rejection rates between 0.045 and 0.059 across the
five sample sizes. Its rejection rate is not monotone in \(n\), rising from
0.045 at \(n=500\) to 0.057 at \(n=1000\), although this difference is modest
relative to the Monte Carlo standard error of approximately 0.007 for a
rejection probability near 0.05 based on 1,000 replications.

The selection strata help explain the finite-sample calibration pattern. The
proportion of replications in which both populations select the correctly
specified Gaussian family rises from 7.2\% at \(n=50\) to 69.8\% at \(n=500\)
and 83.4\% at \(n=1000\), indicating that model selection becomes progressively
more stable with sample size. For \(n\le500\), bootstrap rejection rates are
higher when at least one population selects a non-Gaussian family, while the
AIF procedure has lower rejection rates within this stratum. This ordering
disappears at \(n=1000\), when selection is substantially more stable. These
results are consistent with model-selection uncertainty being an important
source of finite-sample size distortion: when candidate copulas are not sharply
separated, sampling variation can alter the selected family or the associated
AIC weights, whereas this source of error diminishes as selection stabilizes.
The nonlinear AIF implementation is designed to mitigate this effect by
allowing relative model support to vary across multiplier draws and thereby
partially propagating model-selection uncertainty. This pattern is also
consistent with the maintained unique-best-family asymptotics, under which
selection uncertainty vanishes and the AIF and bootstrap procedures share the
same first-order limit.

\begin{table}[htbp]
\centering
\caption{P-DGP 1 empirical size by copula-selection event}
\label{tab:dgp1-selection-size}

\begin{threeparttable}
\footnotesize
\setlength{\tabcolsep}{3.2pt}
\renewcommand{\arraystretch}{1.10}

\newcommand{\sizecell}[2]{%
    \makecell[c]{#1\\[-1pt]{\scriptsize(#2)}}%
}

\begin{tabular}{@{}llccccc@{}}
\toprule
& &
\multicolumn{5}{c}{Sample size per group} \\
\cmidrule(lr){3-7}
Method
& Selection event
& \(n=50\)
& \(n=100\)
& \(n=200\)
& \(n=500\)
& \(n=1000\) \\
\midrule

\multirow{3}{*}{Bootstrap}
& Overall
& \sizecell{0.093}{1000}
& \sizecell{0.090}{1000}
& \sizecell{0.066}{1000}
& \sizecell{0.043}{1000}
& \sizecell{0.042}{1000} \\

& Both Gaussian
& \sizecell{0.056}{72}
& \sizecell{0.034}{178}
& \sizecell{0.035}{423}
& \sizecell{0.026}{698}
& \sizecell{0.044}{834} \\

& \makecell[l]{At least one\\non-Gaussian}
& \sizecell{0.096}{928}
& \sizecell{0.102}{822}
& \sizecell{0.088}{577}
& \sizecell{0.083}{302}
& \sizecell{0.030}{166} \\

\addlinespace[0.35em]

\multirow{3}{*}{AIF}
& Overall
& \sizecell{0.059}{1000}
& \sizecell{0.059}{1000}
& \sizecell{0.058}{1000}
& \sizecell{0.045}{1000}
& \sizecell{0.057}{1000} \\

& Both Gaussian
& \sizecell{0.042}{72}
& \sizecell{0.039}{178}
& \sizecell{0.043}{423}
& \sizecell{0.033}{698}
& \sizecell{0.061}{834} \\

& \makecell[l]{At least one\\non-Gaussian}
& \sizecell{0.060}{928}
& \sizecell{0.063}{822}
& \sizecell{0.069}{577}
& \sizecell{0.073}{302}
& \sizecell{0.036}{166} \\

\bottomrule
\end{tabular}

\begin{tablenotes}[flushleft]
\scriptsize
\item \textit{Notes:}
Entries report rejection rates at the nominal 5\% level.
The number of Monte Carlo replications in each selection stratum is reported
in parentheses. ``Overall'' denotes the unconditional rejection rate across
all 1,000 replications.
\end{tablenotes}

\end{threeparttable}
\end{table}
Finally, the AIF method takes approximately 1--5 seconds per replication,
compared with 159--898 seconds for the bootstrap method.  Figure~
\ref{fig:simulation-runtime} summarizes the difference across designs and
sample sizes.

\begin{figure}[htbp]
\centering
\includegraphics[width=\textwidth]{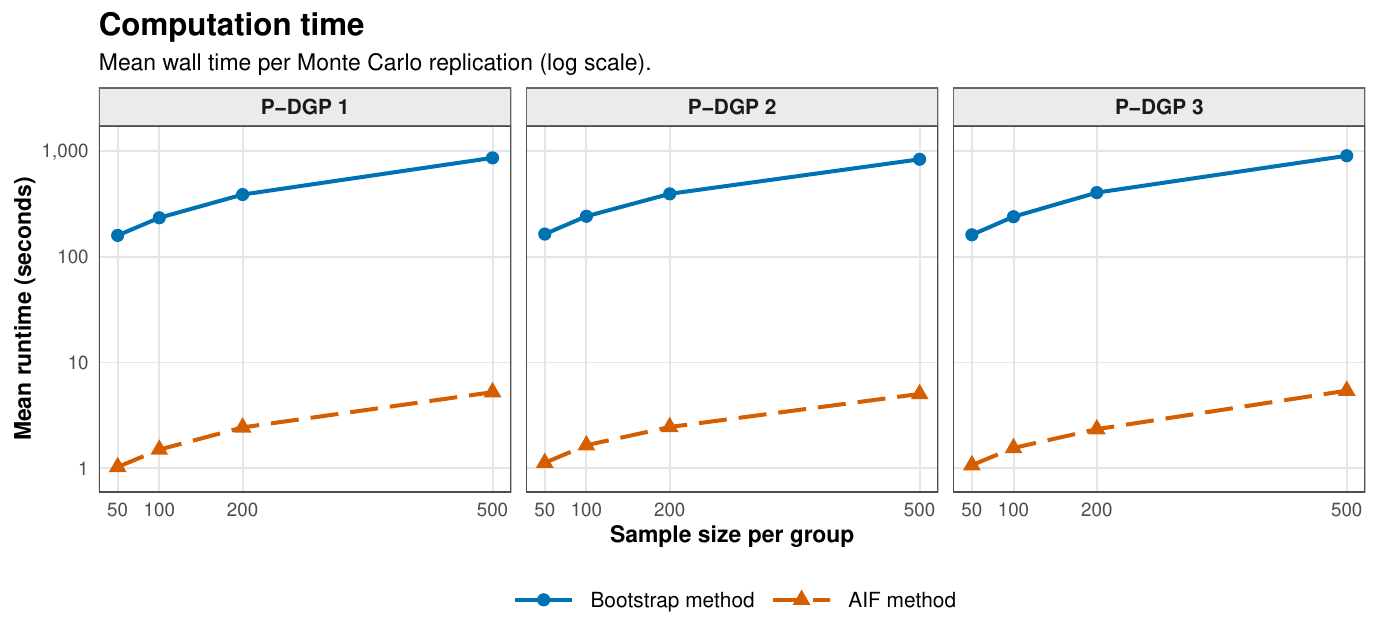}
\caption{Mean computation time per Monte Carlo replication.  The vertical
axis is logarithmic.}
\label{fig:simulation-runtime}
\end{figure}
\section{Empirical Application}
\label{sec:empirical}

We apply the test to intergenerational income mobility using the Panel Study of
Income Dynamics (PSID).  The application asks whether the conditional
distribution of adult income rank differs across parental-education groups at
the same childhood parental-income rank.  It is well suited to the method
because the marginal distribution of childhood income rank differs sharply
across education groups, while the estimand conditions on a common physical
rank value rather than requiring these margins to be equal.

\subsection{Sample and variables}

For each child, \(X\) is the average of the child's within-year parental-family-
income ranks observed at ages 13--17.  The outcome \(Y\) is the average of the
child's within-year adult-family-income ranks observed at ages 30--36.  We
require at least three valid annual observations in each age window and retain
the oldest eligible child from each origin family.  Parental education is the
maximum completed schooling observed for the linked birth parents.  We define
the low, middle, and high groups as at most 12 years, 13--15 years, and at
least 16 years, respectively.  The main analysis is unweighted.

The final sample contains 1,047 individuals: 435 in the low group, 289 in the
middle group, and 323 in the high group.  Table~\ref{tab:empirical-sample}
documents the pronounced differences in the \(X\) margins.  Mean childhood
parental-income rank rises from 0.356 to 0.702 between the low and high
education groups; mean adult-income rank rises from 0.384 to 0.625.  The
empirical CDFs in Figure~\ref{fig:empirical-x-margins} reinforce why a method
that permits population-specific conditioning-variable margins is useful.

\begin{table}[htbp]
\centering
\caption{PSID parental-education sample}
\label{tab:empirical-sample}
\footnotesize
\begin{tabular}{lrrrrr}
\toprule
Group & \(N\) & Mean \(X\) & Median \(X\) & Mean \(Y\) & Median \(Y\) \\
\midrule
Low (\(\leq12\) years) & 435 & 0.356 & 0.324 & 0.384 & 0.356 \\
Middle (13--15 years) & 289 & 0.495 & 0.490 & 0.486 & 0.486 \\
High (\(\geq16\) years) & 323 & 0.702 & 0.749 & 0.625 & 0.669 \\
\bottomrule
\end{tabular}
\end{table}

\begin{figure}[htbp]
\centering
\includegraphics[width=0.94\textwidth]{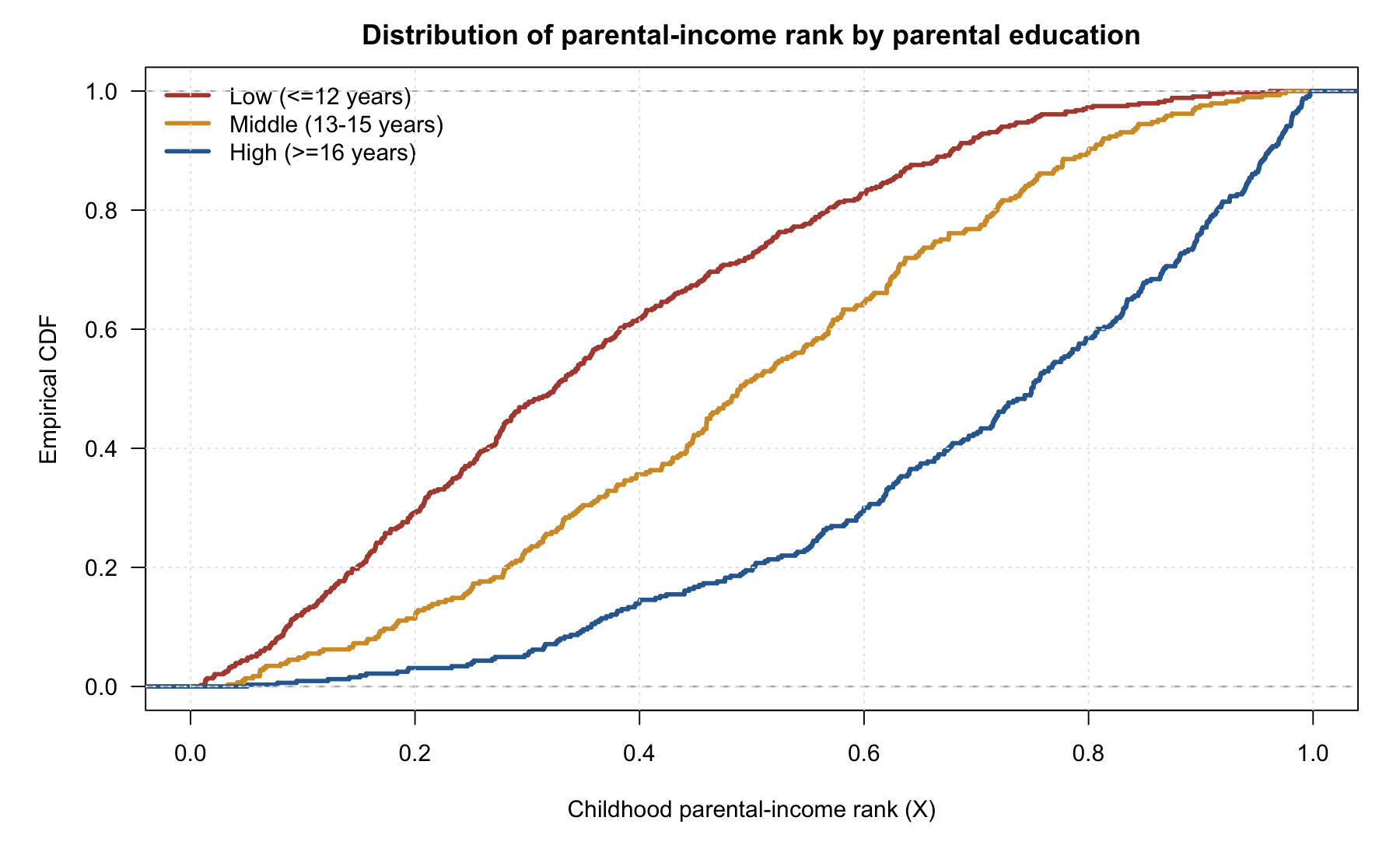}
\caption{Empirical distribution of childhood parental-income rank by parental-
education group.}
\label{fig:empirical-x-margins}
\end{figure}

As a conditional-mean benchmark, an OLS regression of \(Y\) on \(X\), group
indicators, and their interactions yields high- and middle-group indicator
\(p\)-values of 0.357 and 0.729 and interaction \(p\)-values of 0.190 and
0.325.  The distributional comparison below therefore contains information
not summarized by this linear conditional-mean specification.

\subsection{Testing strategy and main results}

For every unordered pair, we test both directions.  The predicted ordering is
that the higher-education group has a better adult-income distribution,
\(H_{\mathrm{higher}}(y\mid x)\leq H_{\mathrm{lower}}(y\mid x)\); the reverse
test interchanges the groups.  We describe an ordering as supported only when the predicted null is not rejected, and the reverse null is rejected.  This two-direction rule does not turn a failure to reject into proof of dominance,
but it rules out the opposite ordering.  We adjust the six directional
\(p\)-values by the Holm procedure.

The baseline AIC model-average specification is based on pseudo-MLE,
\(\varepsilon=0.03\), a \(25\times25\) initial grid, and 1,999 multiplier
draws.  For each unordered pair \((g,h)\), we set
\(a_N=\sqrt{\log(n_g+n_h)}\), as in the simulations.  Let
\[
p_j=\varepsilon+\frac{j-1}{J-1}(1-2\varepsilon),
\qquad j=1,\ldots,J,\qquad J=25.
\]
For each pair, we construct a common grid as the Cartesian product of the
pooled-sample empirical \(p_j\)-quantiles of \(X\) and \(Y\), retaining only
points at which all four group-specific empirical margins lie in
\([\varepsilon,1-\varepsilon]\).  Thus, the predicted and reverse tests for a given pair are evaluated at the
same \((x,y)\) values, although the evaluation grid may differ across pairs.

As a robustness check, we also use 1,999 replications of the full weighted
bootstrap.  In each replication, the empirical margins are re-estimated, all
candidate copula families are refitted for each group, the AIC weights are
recomputed, and the conditional CDFs are reconstructed on the pair-specific
grid.  The full bootstrap therefore propagates uncertainty from marginal
estimation, copula-parameter estimation, and model weighting through the
entire re-estimation procedure, rather than conditioning on the copula
families selected in the original sample.

Table~\ref{tab:empirical-tests} shows that only the high--low comparison
satisfies the two-direction criterion after multiplicity adjustment.  The
predicted high-versus-low null is not rejected by either inference procedure
(raw \(p=1.0000\) for both).  The reverse null is rejected by the
model-average multiplier procedure (raw \(p=0.0010\), Holm-adjusted
\(p=0.0060\)) and by the full  weighted bootstrap (raw
\(p=0.0045\), Holm-adjusted \(p=0.0270\)).  The high--middle and middle--low
comparisons remain inconclusive because neither direction is rejected after
Holm adjustment.  The evidence therefore supports an endpoint separation
between high and low parental education, but not a complete stepwise
education gradient.

\begin{table}[htbp]
\centering
\caption{Directional conditional-dominance tests in the PSID application}
\label{tab:empirical-tests}
\footnotesize
\begin{tabular}{llcccc}
\toprule
Pair & Direction & \multicolumn{2}{c}{AIC-average multiplier} &
\multicolumn{2}{c}{Full AIC-average bootstrap} \\
\cmidrule(lr){3-4}\cmidrule(lr){5-6}
& & Raw \(p\) & Holm \(p\) & Raw \(p\) & Holm \(p\) \\
\midrule
High--Middle & Predicted & 0.9925 & 1.0000 & 0.9945 & 1.0000 \\
             & Reverse   & 0.2795 & 1.0000 & 0.3605 & 1.0000 \\
Middle--Low  & Predicted & 0.9825 & 1.0000 & 0.9875 & 1.0000 \\
             & Reverse   & 0.0410 & 0.2050 & 0.0965 & 0.4825 \\
High--Low    & Predicted & 1.0000 & 1.0000 & 1.0000 & 1.0000 \\
             & Reverse   & 0.0010 & 0.0060 & 0.0045 & 0.0270 \\
\bottomrule
\end{tabular}
\begin{minipage}{0.94\textwidth}
\footnotesize\emph{Notes:} ``Predicted'' tests whether the higher parental-
education group dominates the lower group; ``Reverse'' tests the opposite
ordering.  Holm adjustment is over all six directions within each inference
procedure.  In every weighted-bootstrap draw, the empirical margins and all
13 candidate copulas are re-estimated and the AIC weights are recomputed.
Each inference procedure uses 1,999 draws.
\end{minipage}
\end{table}
The estimated AIC-averaged conditional-CDF difference for the high--low pair
ranges from \(-0.198\) to \(-0.033\) on its pair-specific common grid, has mean
\(-0.112\), and is nonpositive at every evaluated point.  Thus, conditional on
the same observed parental-income rank, the high-education group's estimated
adult-income distribution is uniformly shifted to the right.  The largest
gaps at parental-income ranks near 0.24, 0.50, and 0.76 are 0.137, 0.149, and
0.198 in CDF units.  Figure~\ref{fig:empirical-heatmaps} displays all three
pairwise contrasts; blue cells are consistent with the predicted ordering.
The heatmaps show effect direction and magnitude, not pointwise
\(p\)-values---inference is based on the supremum statistic in
Table~\ref{tab:empirical-tests}.

\begin{figure}[p]
\centering
\includegraphics[width=\textwidth]{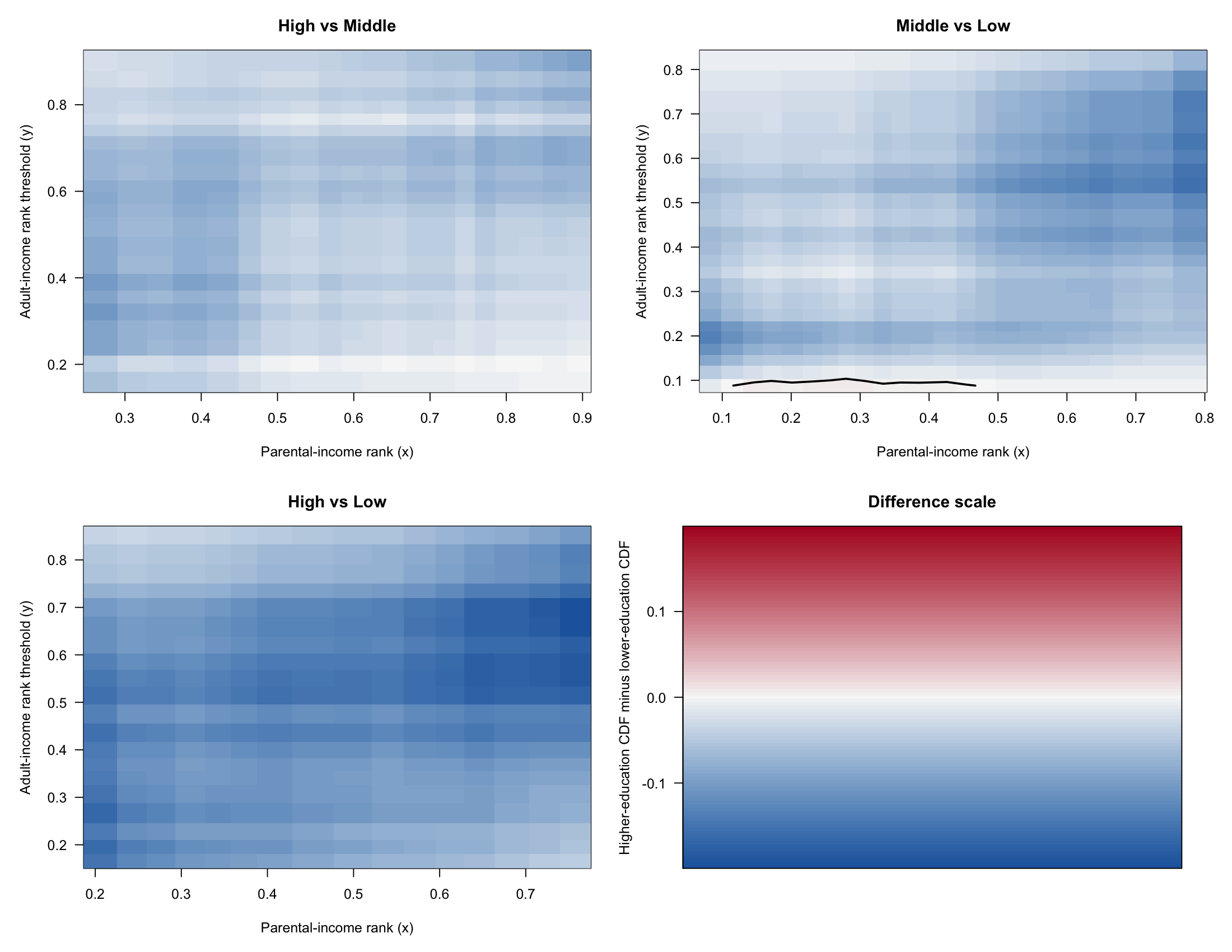}
\caption{AIC-averaged conditional-CDF differences: higher parental education
minus lower parental education.  Negative values (blue) are consistent with
the predicted dominance ordering.  The black curve marks estimated zero
crossings where present.}
\label{fig:empirical-heatmaps}
\end{figure}
\subsection{Robustness and interpretation}

The high--low conclusion remains unchanged when inference is based on the full
AIC-averaged weighted bootstrap.  With 1,999 draws, the reverse null
has a raw \(p\)-value of 0.0045 and a Holm-adjusted \(p\)-value of 0.0270.
Multiplier-based sensitivity checks yield the same qualitative conclusion.
The Holm-adjusted \(p\)-value for the reverse test is 0.006 at trimming values
of 0.05 and 0.10; it is 0.009, 0.015, and 0.003 on 20-, 30-, and 40-point
grids, respectively, and 0.003 under the AIC single-best-family specification.

The full-sample AIC criterion selects Gaussian copulas for the low and middle
groups and a Frank copula for the high group.  As a supplementary diagnostic
for these selected families, we apply White's information-matrix test to each
fitted copula pseudo-likelihood.  The null hypothesis equates the negative
expected Hessian of the copula log-density with the expected outer product of
its score, as implied by correct specification of the selected copula.  The
resulting \(p\)-values are 0.930, 0.120, and 0.970 for the low, middle, and high
groups, respectively, and the diagnostic therefore does not reject any of the
selected families.  The test is applied to the full-sample AIC winner in each
group rather than to the AIC-averaged estimator.  In addition, its
implementation treats the estimated margins as known and does not account for
the preceding model-selection step.  We therefore view nonrejection only as
supplementary evidence against substantial misspecification, rather than as
evidence that the selected family is correctly specified or uniquely optimal.

The endpoint conclusion is more sensitive to changes in the target sample and
outcome definition.  In the SRC-only subsample, the high--low reverse test has
a raw \(p\)-value of 0.013 and a Holm-adjusted \(p\)-value of 0.078.  Replacing
adult family-income rank with individual labor-income rank yields corresponding
\(p\)-values of 0.024 and 0.144.  The main finding should therefore be
interpreted as specific to the unweighted family-income analysis.  It reflects
a descriptive conditional-distribution comparison rather than a causal effect
of parental education, which may also proxy for race, location, family
resources, and long-run selection.  Population-representative inference would
further require survey weights to be incorporated consistently into the
empirical margins, copula estimation, and resampling procedure.
\section{Conclusion}

This paper makes a common-value, two-population conditional-distribution surface an operational object of inference when covariate margins differ across populations. The resulting comparison asks whether the entire outcome distributions are ordered along an empirically relevant continuum of physical covariate and outcome values. Its value is therefore research-level rather than estimator-specific: it permits a coherent region-wide comparison while preserving the distinction between a common physical value and a common percentile position. The accompanying inferential contribution turns the fitted surface into simultaneous evidence for a uniform ordering, rather than leaving it as a visualization or a collection of unadjusted pointwise contrasts.

The copula-derivative representation supplies the structure needed to implement this comparison. Population-specific margins locate the same physical covariate value within each group, and the fitted dependence model links conditional distributions across the region. Uniform process and contact-set arguments then propagate uncertainty from the estimated margins and dependence parameters through a one-sided statistic whose binding locations are unknown. The formal guarantees require correct specification within a finite copula class, smoothness and trimming conditions, and a uniquely best candidate family. Thus, the method offers an explicit structure--flexibility tradeoff rather than a claim that dependence modeling is costless.

The simulations reinforce both the value and the limits of the inferential
guarantee.  Rejection probabilities generally increase as alternatives become
easier to distinguish, but small-sample calibration remains sensitive to
model selection and varies across designs.  Evidence from the misspecified
design is informative about robustness without extending the theory beyond
its stated regime.  In the descriptive PSID application, both inference
procedures support the high--low endpoint comparison.  The predicted
high-versus-low null is not rejected, whereas the reverse null is rejected
after multiplicity adjustment by both the multiplier procedure and the full
 weighted bootstrap; for the latter, the Holm-adjusted \(p\)-value
is 0.0270.  The adjacent high--middle and middle--low comparisons remain
inconclusive.  The empirical evidence therefore supports bounded endpoint
separation rather than a complete stepwise education gradient.  This finding
is specific to the unweighted family-income analysis and does not establish a
causal effect of parental education.  Future work can address model ties,
broader misspecification, and weighted population comparisons.  More
generally, the paper shows how an explicit dependence structure can make
region-wide distributional comparisons feasible while keeping their
inferential and empirical boundaries visible.
\clearpage

\bibliographystyle{apalike}
\bibliography{sample}

\newpage
\appendix
\section{Proofs}
\label{app:proofs}

\subsection*{Proof of Lemma~\ref{lem:oracle_equivalence}}
\begin{proof}
Fix \(g\in\{1,2\}\). For each candidate model, write
\[
    \widehat h_{g\ell}(x,y)
    =
    \partial_u C_\ell
    \{\widehat F_{gX}(x),\widehat F_{gY}(y);\widehat\theta_{g\ell}\}.
\]
The model-averaged estimator is
\[
    \widehat H_g(y\mid x)
    =
    \sum_{\ell=1}^Q
    \widehat w_{g\ell}\widehat h_{g\ell}(x,y).
\]
Let
\[
    \widehat H_g^{\,o}(y\mid x)
    =
    \widehat h_{g\ell_g^\ast}(x,y)
    =
    \partial_u C_{\ell_g^\ast}
    \{\widehat F_{gX}(x),\widehat F_{gY}(y);\widehat\theta_g\}
\]
denote the oracle estimator based on the uniquely identified limiting copula
model.

By Assumption~\ref{ass:copula_first_stage} and the uniform law of
large numbers for the pseudo-likelihood criteria, for every
\(j\ne\ell_g^\ast\),
\[
    \frac{1}{n_g}
    \left\{
    \operatorname{AIC}_{gj}
    -
    \operatorname{AIC}_{g\ell_g^\ast}
    \right\}
    \overset{p}{\rightarrow} 
    2\kappa_{gj},
    \qquad
    \kappa_{gj}>0.
\]
Since the number of candidate models is fixed, let
\(\kappa_g=\min_{j\ne\ell_g^\ast}\kappa_{gj}>0\). For any
\(c\in(0,\kappa_g)\), with probability approaching one,
\[
    \operatorname{AIC}_{gj}
    -
    \operatorname{AIC}_{g\ell_g^\ast}
    \ge 2c n_g,
    \qquad
    j\ne\ell_g^\ast .
\]
Using the AIC-weight formula, after subtracting
\(\operatorname{AIC}_{g\ell_g^\ast}\) from all criteria,
\[
    \widehat w_{gj}
    =
    \frac{
    \exp[-\{\operatorname{AIC}_{gj}
    -
    \operatorname{AIC}_{g\ell_g^\ast}\}/2]
    }{
    1+
    \sum_{m\ne\ell_g^\ast}
    \exp[-\{\operatorname{AIC}_{gm}
    -
    \operatorname{AIC}_{g\ell_g^\ast}\}/2]
    }.
\]
Hence, uniformly over \(j\ne\ell_g^\ast\),
\[
    \widehat w_{gj}
    =
    O_p\{\exp(-c n_g)\},
    \qquad
    1-\widehat w_{g\ell_g^\ast}
    =
    \sum_{j\ne\ell_g^\ast}\widehat w_{gj}
    =
    O_p\{\exp(-c n_g)\}.
\]

For every candidate copula, \(\partial_u C_\ell(u,v;\theta)\) is a
conditional distribution function in \(v\) and therefore takes values in
\([0,1]\). Hence,
\[
    0
    \le
    \widehat h_{g\ell}(x,y)
    \le
    1,
    \qquad \ell=1,\ldots,Q,
\]
and consequently
\[
    \max_{1\le\ell\le Q}
    \sup_{(x,y)\in\Lambda_\varepsilon}
    |\widehat h_{g\ell}(x,y)|
    \le 1.
\]
Therefore,
\[
\begin{aligned}
&\sup_{(x,y)\in\Lambda_\varepsilon}
\left|
\widehat H_g(y\mid x)-\widehat H_g^{\,o}(y\mid x)
\right|
\\
&\quad
=
\sup_{(x,y)\in\Lambda_\varepsilon}
\left|
\sum_{j\ne\ell_g^\ast}
\widehat w_{gj}
\{\widehat h_{gj}(x,y)-\widehat h_{g\ell_g^\ast}(x,y)\}
\right|
\\
&\quad
\le
\sum_{j\ne\ell_g^\ast}
\widehat w_{gj}
=
O_p\{\exp(-c n_g)\}.
\end{aligned}
\]
Since \(\exp(-c n_g)=o(n_g^{-1/2})\), the desired result follows.
\end{proof}
\subsection*{Proof of Theorem~\ref{thm:linear_h}}
\begin{proof}
Fix \(g\in\{1,2\}\). By Lemma~\ref{lem:oracle_equivalence}, it is enough to
derive the expansion for the oracle estimator
\[
\widehat H_g^{\,o}(y\mid x)
=
\partial_u C_{\ell_g^\ast}
\left(
\widehat F_{gX}(x),
\widehat F_{gY}(y);
\widehat\theta_g
\right),
\]
because
\[
\sup_{z\in\Lambda_\varepsilon}
|\widehat H_g(z)-\widehat H_g^{\,o}(z)|
=
o_p(n_g^{-1/2}).
\]

Let
\[
    m_g(u,v,\theta)
    =
    \partial_u C_{\ell_g^\ast}(u,v;\theta).
\]
Then
\[
    H_g(y\mid x)
    =
    m_g\{F_{gX}(x),F_{gY}(y),\theta_g\}.
\]
On the trimmed domain, with probability approaching one,
\(\widehat F_{gX}(x)\) and \(\widehat F_{gY}(y)\) remain in a compact subset of
\((0,1)\), uniformly over \((x,y)\in\Lambda_\varepsilon\). By the smoothness
condition in Assumption~\ref{ass:basic_regular}, \(m_g\) is continuously
differentiable in a neighborhood of the true arguments, with derivatives
\[
    \partial_u m_g=C_{g,uu},
    \qquad
    \partial_v m_g=C_{g,uv},
    \qquad
    \partial_\theta m_g=C_{g,u\theta}.
\]

Define
\[
\delta_{gn}
=
\sup_x|\widehat F_{gX}(x)-F_{gX}(x)|
+
\sup_y|\widehat F_{gY}(y)-F_{gY}(y)|
+
\|\widehat\theta_g-\theta_g\|.
\]
The empirical marginal processes are root-\(n_g\) bounded, and
Assumption~\ref{ass:copula_first_stage} gives
\(\widehat\theta_g-\theta_g=O_p(n_g^{-1/2})\). Hence
\[
    \delta_{gn}=O_p(n_g^{-1/2}).
\]
A uniform first-order Taylor expansion gives
\[
\begin{aligned}
\widehat H_g^{\,o}(y\mid x)-H_g(y\mid x)
=
&\ C_{g,uu}(x,y)
\{\widehat F_{gX}(x)-F_{gX}(x)\}
\\
&+
C_{g,uv}(x,y)
\{\widehat F_{gY}(y)-F_{gY}(y)\}
\\
&+
C_{g,u\theta}(x,y)^\top
(\widehat\theta_g-\theta_g)
+
r_{gn}(x,y).
\end{aligned}
\]
The remainder satisfies
\[
    \sup_{z\in\Lambda_\varepsilon}|r_{gn}(z)|
    \le
    \omega_g(\delta_{gn})\delta_{gn},
\]
where \(\omega_g(\cdot)\) is a modulus of continuity for the first derivatives
of \(m_g\) on the relevant compact neighborhood. Since
\(\omega_g(\delta_{gn})=o_p(1)\) and
\(\delta_{gn}=O_p(n_g^{-1/2})\),
\[
    \sup_{z\in\Lambda_\varepsilon}|r_{gn}(z)|
    =
    o_p(n_g^{-1/2}).
\]

For the scaled empirical marginal distribution,
\[
    \widehat F_{gX}(x)
    =
    \frac{1}{n_g+1}
    \sum_{i=1}^{n_g}\mathbf 1(X_{gi}\le x),
\]
we have uniformly in \(x\),
\[
\sqrt{n_g}
\{\widehat F_{gX}(x)-F_{gX}(x)\}
=
\frac{1}{\sqrt{n_g}}
\sum_{i=1}^{n_g}
\{\mathbf 1(X_{gi}\le x)-F_{gX}(x)\}
+
o_p(1).
\]
The same expansion holds uniformly for \(\widehat F_{gY}\). Moreover, by
Assumption~\ref{ass:copula_first_stage},
\[
\sqrt{n_g}(\widehat\theta_g-\theta_g)
=
\frac{1}{\sqrt{n_g}}
\sum_{i=1}^{n_g}
\psi_g(Z_{gi})
+
o_p(1).
\]
Substituting these three expansions into the Taylor expansion yields
\[
\begin{aligned}
&\sqrt{n_g}\{\widehat H_g^{\,o}(y\mid x)-H_g(y\mid x)\}
\\
&\quad =
\frac{1}{\sqrt{n_g}}
\sum_{i=1}^{n_g}
\Big[
C_{g,uu}(x,y)
\{\mathbf 1(X_{gi}\le x)-F_{gX}(x)\}
\\
&\qquad\qquad
+
C_{g,uv}(x,y)
\{\mathbf 1(Y_{gi}\le y)-F_{gY}(y)\}
+
C_{g,u\theta}(x,y)^\top\psi_g(Z_{gi})
\Big]
+
o_p(1),
\end{aligned}
\]
uniformly over \((x,y)\in\Lambda_\varepsilon\). Combining this with
Lemma~\ref{lem:oracle_equivalence} proves the uniform asymptotic linear
representation for \(\widehat H_g\).

It remains to verify weak convergence. Consider the class
\[
    \mathcal F_g
    =
    \{\phi_{g,z}:z\in\Lambda_\varepsilon\}.
\]
The first component of \(\phi_{g,z}\) belongs to a class of the form
\[
    \left\{
    a(x,y)\{\mathbf 1(X_g\le x)-F_{gX}(x)\}:
    (x,y)\in\Lambda_\varepsilon
    \right\},
\]
where \(a(x,y)=C_{g,uu}(x,y)\) is uniformly bounded. The centered half-line
indicator class is a VC class and hence Donsker. Multiplication by a uniformly
bounded scalar coefficient preserves the Donsker property. The same argument
applies to the second component involving
\(\mathbf 1(Y_g\le y)-F_{gY}(y)\).

For the third component, the class is
\[
    \left\{
    C_{g,u\theta}(x,y)^\top\psi_g(Z_g):
    (x,y)\in\Lambda_\varepsilon
    \right\}.
\]
Since \(C_{g,u\theta}(x,y)\) ranges over a bounded subset of a finite-dimensional
Euclidean space and \(E\|\psi_g(Z_g)\|^2<\infty\), this is a finite-dimensional
linear class with a square-integrable envelope. Hence it is Donsker. Therefore,
by permanence of the Donsker property under finite sums,
\(\mathcal F_g\) is \(P_g\)-Donsker.

The empirical process central limit theorem then gives
\[
    \frac{1}{\sqrt{n_g}}
    \sum_{i=1}^{n_g}
    \phi_{g,\cdot}(Z_{gi})
    \rightsquigarrow
    \mathbb G_g
    \quad
    \text{in }
    \ell^\infty(\Lambda_\varepsilon),
\]
where \(\mathbb G_g\) is a tight mean-zero Gaussian process with covariance
kernel \(\Sigma_g\). The theorem follows.
\end{proof}

\subsection*{Proof of Theorem~\ref{thm:two_sample_sup_limit}}
\begin{proof}
By Theorem~\ref{thm:linear_h}, for \(g=1,2\),
\[
    \sqrt{n_g}\{\widehat H_g-H_g\}
    \rightsquigarrow
    \mathbb G_g
    \quad
    \text{in }
    \ell^\infty(\Lambda_\varepsilon).
\]
Since the two samples are independent, the two processes converge jointly to
independent Gaussian limits \((\mathbb G_1,\mathbb G_2)\).

Because
\[
    \widehat\Delta-\Delta
    =
    (\widehat H_1-H_1)-(\widehat H_2-H_2),
\]
we have
\[
\begin{aligned}
    s_N(\widehat\Delta-\Delta)
    =
    \sqrt{\frac{n_2}{n_1+n_2}}\,
    \sqrt{n_1}(\widehat H_1-H_1)
    -
    \sqrt{\frac{n_1}{n_1+n_2}}\,
    \sqrt{n_2}(\widehat H_2-H_2).
\end{aligned}
\]
Since \(n_1/(n_1+n_2)\to\lambda\), Slutsky's theorem and the continuous mapping
theorem yield
\[
    s_N(\widehat\Delta-\Delta)
    \rightsquigarrow
    \sqrt{1-\lambda}\,\mathbb G_1
    -
    \sqrt{\lambda}\,\mathbb G_2
    =
    \mathbb G_\Delta
\]
in \(\ell^\infty(\Lambda_\varepsilon)\).

Let \(\mathcal S:\ell^\infty(\Lambda_\varepsilon)\to\mathbb R\) denote the
supremum functional, \(\mathcal S(f)=\sup_{z\in\Lambda_\varepsilon}f(z)\).
Then
\(T_N=s_N\mathcal S(\widehat\Delta),\)
and hence
\[
    T_N-s_N\mathcal S(\Delta)
    =
    s_N\{\mathcal S(\widehat\Delta)-\mathcal S(\Delta)\}.
\]

Since \(\Lambda_\varepsilon\) is compact and \(\Delta\) is continuous,
\(\mathcal S\) is Hadamard directionally differentiable at \(\Delta\),
tangentially to \(C(\Lambda_\varepsilon)\)
\citep[Lemma~S.4.9]{FangSantos2019}. Its directional derivative is
\[
    \mathcal S'_{\Delta}(h)
    =
    \sup_{z\in\Gamma^\ast(\Delta)}h(z),
\]
where
\[
    \Gamma^\ast(\Delta)
    =
    \left\{
    z\in\Lambda_\varepsilon:
    \Delta(z)=\sup_{\tilde z\in\Lambda_\varepsilon}\Delta(\tilde z)
    \right\}
\]
is the set of maximizers of \(\Delta\).

We have
\[
    s_N(\widehat\Delta-\Delta)
    \rightsquigarrow
    \mathbb G_\Delta
    \quad
    \text{in }
    \ell^\infty(\Lambda_\varepsilon).
\]
Because \(\mathbb G_\Delta\) has a version in \(C(\Lambda_\varepsilon)\), the
directional functional delta method of
\citet[Theorem~2.1]{Shapiro1991} and
\citet[Proposition~1]{dumbgen} gives
\[
    s_N\{\mathcal S(\widehat\Delta)-\mathcal S(\Delta)\}
    \rightsquigarrow
    \mathcal S'_{\Delta}(\mathbb G_\Delta)
    =
    \sup_{z\in\Gamma^\ast(\Delta)}
    \mathbb G_\Delta(z).
\]
Equivalently, because \(T_N=s_N\mathcal S(\widehat\Delta)\),
\[
    T_N
    -
    s_N\sup_{z\in\Lambda_\varepsilon}\Delta(z)
    \rightsquigarrow
    \sup_{z\in\Gamma^\ast(\Delta)}
    \mathbb G_\Delta(z).
\]
This completes the proof.
\end{proof}

\subsection*{Proof of Theorem~\ref{thm:oracle_test_validity}}

\begin{proof}
Under the boundary null,
\(\sup_{z\in\Lambda_\varepsilon}\Delta(z)=0\), so
\(\Gamma^\ast(\Delta)=\Gamma(\Delta)\). Therefore,
Theorem~\ref{thm:two_sample_sup_limit} gives
\[
    T_N
    \rightsquigarrow
    \sup_{z\in\Gamma(\Delta)}
    \mathbb G_\Delta(z).
\]
Since \(c_{1-\alpha}\) is the \((1-\alpha)\)-quantile of this
limiting distribution and the distribution is continuous at
\(c_{1-\alpha}\),
\[
    \lim_{n_1,n_2\to\infty}
    \Prob(T_N>c_{1-\alpha})
    \leq \alpha.
\]

If the null is strict, then by compactness of \(\Lambda_\varepsilon\) and
continuity of \(\Delta\), there exists \(\eta>0\) such that
\[
    \sup_{z\in\Lambda_\varepsilon}\Delta(z)\leq -\eta.
\]
Theorem~\ref{thm:two_sample_sup_limit} implies
\[
    T_N
    =
    s_N\sup_{z\in\Lambda_\varepsilon}\Delta(z)
    +
    O_p(1)
    \leq
    -s_N\eta+O_p(1)
    \overset{p}{\longrightarrow}
    -\infty.
\]
Since \(c_{1-\alpha}=0\) under the strict null,
\[
    \Prob(T_N>c_{1-\alpha})\to0.
\]
Together, the boundary and strict-null cases establish the stated
asymptotic size control.

If a fixed alternative holds, then
\(\sup_{z\in\Lambda_\varepsilon}\Delta(z)=\eta>0\) for some \(\eta\). Hence
\[
    T_N
    =
    s_N\eta+O_p(1)
    \overset{p}{\longrightarrow}
    +\infty.
\]
Since the oracle critical value is finite,
\[
    \Prob(T_N>c_{1-\alpha})\to1,
\]
which establishes consistency.
\end{proof}

\subsection*{Proof of Theorem~\ref{thm:multiplier}}

For any \(\eta>0\), define the \(\eta\)-neighborhood of the contact set by
\[
    \Gamma^\eta(\Delta)
    =
    \left\{
    z\in\Lambda_\varepsilon:
    d\left(z,\Gamma(\Delta)\right)\le \eta
    \right\}.
\]

\begin{lemma}[Consistency of the estimated contact set]
\label{lem:contact_set_consistency}
Under $H_0$ and the same conditions as in Theorem~\ref{thm:two_sample_sup_limit}, suppose that
\(\Gamma(\Delta)\neq\varnothing\). If \(a_N\to\infty\) and
\(a_N/s_N\to0\), then  for every \(\eta>0\),
\[
    \Prob\left\{
    \Gamma(\Delta)\subseteq\widehat\Gamma_N
    \subseteq\Gamma^\eta(\Delta)
    \right\}\to1.
\]
\end{lemma}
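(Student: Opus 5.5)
The argument rests on one uniform bound. By Theorem~\ref{thm:two_sample_sup_limit}, $s_N(\widehat\Delta-\Delta)\rightsquigarrow\mathbb G_\Delta$ in $\ell^\infty(\Lambda_\varepsilon)$, and $\mathbb G_\Delta$ is tight. By the continuous mapping theorem applied to the sup-norm,
\[
    R_N=\sup_{z\in\Lambda_\varepsilon}s_N|\widehat\Delta(z)-\Delta(z)|=O_p(1).
\]
Both inclusions will be derived from this single fact together with $a_N\to\infty$ and $a_N/s_N\to0$.

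\emph{First inclusion.} Under $\mathcal H_0$ we have $\Delta(z)=0$ for every $z\in\Gamma(\Delta)$. Hence, uniformly over such $z$,
\[
    s_N\widehat\Delta(z)=s_N\{\widehat\Delta(z)-\Delta(z)\}\ge -R_N.
\]
On the event $\{R_N<a_N\}$ every point of $\Gamma(\Delta)$ satisfies $s_N\widehat\Delta(z)>-a_N$, so $\Gamma(\Delta)\subseteq\widehat\Gamma_N$. Since $R_N=O_p(1)$ and $a_N\to\infty$, $\Prob(R_N<a_N)\to1$.

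\emph{Second inclusion.} Fix $\eta>0$ and set $K_\eta=\Lambda_\varepsilon\setminus\operatorname{int}\Gamma^\eta(\Delta)$, or more simply the set $\{z\in\Lambda_\varepsilon: d(z,\Gamma(\Delta))\ge\eta\}$. This set is a closed subset of a compact set, hence compact. By Assumption~\ref{ass:basic_regular}, $\Delta$ is continuous. Under $\mathcal H_0$, $\Delta<0$ on $K_\eta$, because any zero of $\Delta$ lies in $\Gamma(\Delta)$, whose points are at distance $0<\eta$. If $K_\eta$ is empty the inclusion is trivial. Otherwise $\Delta$ attains its maximum on $K_\eta$, so there is a gap $\gamma_\eta=-\max_{K_\eta}\Delta>0$. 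For $z\in K_\eta$,
\[
    s_N\widehat\Delta(z)\le s_N\Delta(z)+R_N\le -s_N\gamma_\eta+R_N .
\]
Since $a_N/s_N\to0$, we have $s_N\gamma_\eta-a_N\to\infty$, so with probability approaching one $R_N<s_N\gamma_\eta-a_N$. On that event no point of $K_\eta$ belongs to $\widehat\Gamma_N$, which gives $\widehat\Gamma_N\subseteq\Gamma^\eta(\Delta)$. Intersecting the two events, each of probability tending to one, yields the claim.

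\emph{Main obstacle.} The only delicate point is the continuity of $\Delta$, which is needed for the compactness-based separation gap $\gamma_\eta$. Continuity follows because $u_g,v_g$ are continuous marginal CDFs and $\partial_uC_{\ell_g^\ast}$ is uniformly continuous on the trimmed domain by Assumption~\ref{ass:basic_regular}; this is the same fact already used in the proof of Theorem~\ref{thm:oracle_test_validity}. The hypothesis $\Gamma(\Delta)\ne\varnothing$ is what makes $d(\cdot,\Gamma(\Delta))$ well defined and finite.
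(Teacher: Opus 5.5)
Your proof is correct and follows the same route as the paper. It uses the uniform $O_p(1)$ bound on $s_N(\widehat\Delta-\Delta)$ from Theorem~\ref{thm:two_sample_sup_limit}, gets the first inclusion from $\Delta=0$ on $\Gamma(\Delta)$ together with $a_N\to\infty$, and gets the second from a strictly negative maximum of $\Delta$ on a compact set bounded away from $\Gamma(\Delta)$ combined with $a_N/s_N\to0$. Your choice of $\{z\in\Lambda_\varepsilon: d(z,\Gamma(\Delta))\ge\eta\}$ is slightly cleaner than the paper's $\Lambda_\varepsilon\setminus\Gamma^\eta(\Delta)$, which is relatively open rather than compact, so your version justifies the separation gap directly.
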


\begin{proof}
By Theorem~\ref{thm:two_sample_sup_limit},
\[
    \sup_{z\in\Lambda_\varepsilon}
    \left|
    s_N\{\widehat\Delta(z)-\Delta(z)\}
    \right|
    =
    O_p(1).
\]

First, for any \(z\in\Gamma(\Delta)\), we have \(\Delta(z)=0\). Hence,
uniformly over \(\Gamma(\Delta)\),
\[
    s_N\widehat\Delta(z)
    =
    s_N\{\widehat\Delta(z)-\Delta(z)\}
    =
    O_p(1).
\]
Since \(a_N\to\infty\), it follows that
\[
    \Prob\left(
    \Gamma(\Delta)\subseteq \widehat\Gamma_N
    \right)\to 1.
\]

Next, fix any \(\eta>0\) and define
\[
    A_\eta
    =
    \Lambda_\varepsilon\setminus \Gamma^{\eta}(\Delta).
\]
Because \(\Lambda_\varepsilon\) is compact and \(\Delta\) is continuous,
\(A_\eta\) is compact. Moreover, under the boundary null,
\[
    \sup_{z\in\Lambda_\varepsilon}\Delta(z)=0,
\]
and by the definition of the contact set, \(\Delta(z)<0\) for all
\(z\in A_\eta\). Therefore, there exists \(c_\eta>0\) such that
\[
    \sup_{z\in A_\eta}\Delta(z)\le -c_\eta .
\]
Thus, uniformly over \(z\in A_\eta\),
\[
    s_N\widehat\Delta(z)
    =
    s_N\{\widehat\Delta(z)-\Delta(z)\}
    +
    s_N\Delta(z)
    \le
    O_p(1)-s_N c_\eta .
\]
Since \(a_N/s_N\to0\), we have
\[
    \Prob\left(
    \widehat\Gamma_N\cap A_\eta=\varnothing
    \right)\to1.
\]
Equivalently,
\[
    \Prob\left(
    \widehat\Gamma_N
    \subseteq
    \Gamma^{\eta}(\Delta)
    \right)\to1.
\]
This completes the proof.
\end{proof}

\begin{remark}
\label{rem:strict_null_contact_set}
If \(\Gamma(\Delta)=\varnothing\), then compactness of
\(\Lambda_\varepsilon\) and continuity of \(\Delta\) imply
\[
    \sup_{z\in\Lambda_\varepsilon}\Delta(z)<0,
\]
and the same argument gives
\[
    \Prob(\widehat\Gamma_N=\varnothing)\to1.
\]
\end{remark}

\begin{proof}
The proof proceeds in four steps.

\textit{Step 1: Consistency of the plug-in influence functions.}
Fix \(g\in\{1,2\}\). On an event whose probability tends to one, the
AIC-selected model is \(\ell_g^\ast\). On this event, the plug-in and population
influence functions are evaluated under the same copula family.

Write
\[
A_g(z)=C_{g,uu}(z),\qquad
B_g(z)=C_{g,uv}(z),\qquad
D_g(z)=C_{g,u\theta}(z),
\]
and let \(\widehat A_g(z)\), \(\widehat B_g(z)\), and \(\widehat D_g(z)\) be
their plug-in counterparts. By uniform consistency of the empirical marginals,
consistency of \(\widehat\theta_g\), and uniform continuity of the relevant
copula derivatives on the trimmed domain,
\[
\sup_{z\in\Lambda_\varepsilon}
|\widehat A_g(z)-A_g(z)|=o_p(1),
\]
\[
\sup_{z\in\Lambda_\varepsilon}
|\widehat B_g(z)-B_g(z)|=o_p(1),
\]
and
\[
\sup_{z\in\Lambda_\varepsilon}
\|\widehat D_g(z)-D_g(z)\|=o_p(1).
\]
Moreover,
\[
\sup_x|\widehat F_{gX}(x)-F_{gX}(x)|=o_p(1),
\qquad
\sup_y|\widehat F_{gY}(y)-F_{gY}(y)|=o_p(1).
\]

The plug-in estimate \(\widehat\psi_{gi}\) is obtained by replacing the
population score, Hessian, information matrix, and marginal correction terms in
the influence function of the pseudo-likelihood estimator by their sample
counterparts. Under the same smoothness and nonsingularity conditions used to
derive the expansion in Assumption~\ref{ass:copula_first_stage}, these plug-in
components are consistent, and hence
\[
    \frac{1}{n_g}
    \sum_{i=1}^{n_g}
    \|\widehat\psi_{gi}-\psi_g(Z_{gi})\|^2
    =
    o_p(1).
\]
Combining these facts with the boundedness of the copula derivative weights and
\(E\|\psi_g(Z_g)\|^2<\infty\), we obtain
\[
    \sup_{z\in\Lambda_\varepsilon}
    \frac{1}{n_g}
    \sum_{i=1}^{n_g}
    \{\widehat\phi_{g,z}(Z_{gi})-\phi_{g,z}(Z_{gi})\}^2
    =
    o_p(1).
\]
This establishes the required plug-in consistency.

\textit{Step 2: Conditional weak convergence of the multiplier process.}
Define the infeasible multiplier process
\[
\mathbb G_N^\circ(z)
=
\sqrt{1-\widehat\lambda}
\frac{1}{\sqrt{n_1}}
\sum_{i=1}^{n_1}
\xi_{1i}\phi_{1,z}(Z_{1i})
-
\sqrt{\widehat\lambda}
\frac{1}{\sqrt{n_2}}
\sum_{i=1}^{n_2}
\xi_{2i}\phi_{2,z}(Z_{2i}).
\]
The classes \(\{\phi_{g,z}:z\in\Lambda_\varepsilon\}\) are Donsker by the proof
of Theorem~\ref{thm:linear_h}. Therefore, by the conditional multiplier central
limit theorem, jointly across the two independent samples,
\[
    \mathbb G_N^\circ
    \rightsquigarrow_\xi
    \mathbb G_\Delta
    \quad
    \text{in probability in }
    \ell^\infty(\Lambda_\varepsilon).
\]

It remains to replace \(\phi_{g,z}\) by \(\widehat\phi_{g,z}\). The difference
\(\widehat\phi_{g,z}-\phi_{g,z}\) can be decomposed into terms due to
\(\widehat A_g-A_g\), \(\widehat B_g-B_g\), \(\widehat D_g-D_g\), the empirical
centering errors \(\widehat F_{gX}-F_{gX}\) and
\(\widehat F_{gY}-F_{gY}\), and the plug-in error
\(\widehat\psi_{gi}-\psi_g(Z_{gi})\). The first three terms are products of
uniformly \(o_p(1)\) coefficient errors and tight multiplier empirical
processes. The centering terms are products of
\(\sup|\widehat F-F|=o_p(1)\) and \(n_g^{-1/2}\sum_i\xi_{gi}=O_{p,\xi}(1)\).
The last term has conditional variance bounded by
\[
    \frac{1}{n_g}
    \sum_{i=1}^{n_g}
    \|\widehat\psi_{gi}-\psi_g(Z_{gi})\|^2
    =
    o_p(1).
\]
Hence,
\[
    \sup_{z\in\Lambda_\varepsilon}
    |\widehat{\mathbb G}_N^{(m)}(z)-\mathbb G_N^\circ(z)|
    =
    o_{p,\xi}(1)
    \quad
    \text{in probability}.
\]
Thus,
\[
    \widehat{\mathbb G}_N^{(m)}
    \rightsquigarrow_\xi
    \mathbb G_\Delta
    \quad
    \text{in probability in }
    \ell^\infty(\Lambda_\varepsilon).
\]

\textit{Step 3: Consistency of the estimated contact set.}
Under the boundary null, Theorem~\ref{thm:two_sample_sup_limit} gives
\[
    \sup_{z\in\Lambda_\varepsilon}
    |s_N\{\widehat\Delta(z)-\Delta(z)\}|
    =
    O_p(1).
\]
Hence Lemma~\ref{lem:contact_set_consistency} applies directly: for every fixed
\(\eta>0\),
\[
    \Prob\{\Gamma(\Delta)\subseteq\widehat\Gamma_N
    \subseteq\Gamma^\eta(\Delta)\}\to1.
\]

If the null is strict, Remark~\ref{rem:strict_null_contact_set} gives
\(\Prob(\widehat\Gamma_N=\varnothing)\to1\), while
Theorem~\ref{thm:two_sample_sup_limit} gives \(T_N\to_p-\infty\).

\textit{Step 4: Supremum over the estimated contact set and critical values.}
Under the boundary null, Step 3 implies that, with probability approaching one,
\[
\sup_{z\in\Gamma(\Delta)}
\widehat{\mathbb G}_N^{(m)}(z)
\le
\sup_{z\in\widehat\Gamma_N}
\widehat{\mathbb G}_N^{(m)}(z)
\le
\sup_{z\in\Gamma^\eta(\Delta)}
\widehat{\mathbb G}_N^{(m)}(z).
\]
By Step 2, the conditional multiplier process is tight and asymptotically
uniformly equicontinuous on \(\Lambda_\varepsilon\). Since the limiting Gaussian
process has uniformly continuous sample paths on the compact set
\(\Lambda_\varepsilon\), letting \(\eta\downarrow0\) yields
\[
    \sup_{z\in\widehat\Gamma_N}
    \widehat{\mathbb G}_N^{(m)}(z)
    -
    \sup_{z\in\Gamma(\Delta)}
    \widehat{\mathbb G}_N^{(m)}(z)
    =
    o_{p,\xi}(1)
    \quad
    \text{in probability}.
\]
Therefore,
\[
    T_N^{(m)}
    =
    \sup_{z\in\widehat\Gamma_N}
    \widehat{\mathbb G}_N^{(m)}(z)
    \rightsquigarrow_\xi
    \sup_{z\in\Gamma(\Delta)}
    \mathbb G_\Delta(z)
    \quad
    \text{in probability}.
\]

If the distribution of
\(\sup_{z\in\Gamma(\Delta)}\mathbb G_\Delta(z)\) is continuous at
\(c_{1-\alpha}\), the conditional quantile mapping is continuous, and hence
\[
    \widehat c_{1-\alpha}^{\,(m)}
    \overset{p}{\longrightarrow}
    c_{1-\alpha}.
\]

\end{proof}

\subsection*{Proof of Theorem~\ref{thm:bootstrap_cv_consistency}}
To prove Theorem~\ref{thm:bootstrap_cv_consistency}, we first establish two
lemmas. The first shows that bootstrap model averaging is conditionally
first-order equivalent to using the uniquely identified limiting family.

\begin{lemma}[Bootstrap oracle equivalence]
\label{lem:bootstrap_oracle_equivalence}
Suppose Assumptions~\ref{ass:copula_first_stage},
\ref{ass:bootstrap_weights}, and \ref{ass:bootstrap_criterion} hold. For each
population \(g\), define the oracle bootstrap estimator
\[
    \widehat H_g^{*,o}(y\mid x)
    =
    \partial_u C_{\ell_g^\ast}
    \{\widehat F_{gX}^{*}(x),\widehat F_{gY}^{*}(y);
    \widehat\theta_{g\ell_g^\ast}^{*}\}.
\]
Then, conditionally on the observed data and in probability,
\[
    \sup_{(x,y)\in\Lambda_\varepsilon}
    \left|
    \widehat H_g^{*}(y\mid x)
    -
    \widehat H_g^{*,o}(y\mid x)
    \right|
    =
    o_p^{*}(n_g^{-1/2}),
    \qquad g=1,2.
\]
\end{lemma}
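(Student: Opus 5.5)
The argument follows the proof of Lemma~\ref{lem:oracle_equivalence} essentially line by line, with every probability statement replaced by its conditional bootstrap analogue. Fix \(g\) and write
\[
    \widehat h^{*}_{g\ell}(x,y)=\partial_u C_\ell\{\widehat F^{*}_{gX}(x),\widehat F^{*}_{gY}(y);\widehat\theta^{*}_{g\ell}\},
\]
so that
\[
    \widehat H^{*}_g=\sum_\ell \widehat w^{*}_{g\ell}\widehat h^{*}_{g\ell}
    \qquad\text{and}\qquad
    \widehat H^{*,o}_g=\widehat h^{*}_{g\ell_g^\ast}.
\]
Each \(\widehat h^{*}_{g\ell}\) is a conditional distribution function in its second argument, hence lies in \([0,1]\). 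Exactly as in that proof, this gives the deterministic bound
\[
    \sup_{\Lambda_\varepsilon}|\widehat H^{*}_g-\widehat H^{*,o}_g|
    \le \sum_{j\ne\ell_g^\ast}\widehat w^{*}_{gj}.
\]
The whole task is therefore to show that the bootstrap weights on the wrong families are conditionally \(O_p^{*}(e^{-cn_g})\). This rate is \(o_p^{*}(n_g^{-1/2})\), which is what the lemma requires.

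The key step is a conditional separation of the bootstrap AIC criteria. Since \(\mathrm{AIC}^{*}_{g\ell}=-2n_g\widehat M^{*}_{g\ell}(\widehat\theta^{*}_{g\ell})+2k_\ell\) and \(\widehat\theta^{*}_{g\ell}\) maximizes \(\widehat M^{*}_{g\ell}\), I need to control the maximized values \(\sup_\theta\widehat M^{*}_{g\ell}(\theta)\) for each \(\ell\). By Assumption~\ref{ass:bootstrap_criterion} and the triangle inequality for suprema,
\[
    \bigl|\sup_\theta\widehat M^{*}_{g\ell}(\theta)-\sup_\theta\widehat M_{g\ell}(\theta)\bigr|=o_p^{*}(1).
\]
By the uniform convergence in Assumption~\ref{ass:copula_first_stage},
\[
    \sup_\theta\widehat M_{g\ell}(\theta)\overset{p}{\to}\sup_\theta M_{g\ell}(\theta)=M_{g\ell}(\theta^\ast_{g\ell}).
\]
Combining these, and noting that \(k_\ell/n_g\to0\), gives
\[
    n_g^{-1}\{\mathrm{AIC}^{*}_{gj}-\mathrm{AIC}^{*}_{g\ell_g^\ast}\}\to 2\kappa_{gj}>0
\]
conditionally in probability, for every \(j\ne\ell_g^\ast\). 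I would spell this out as follows: for any \(\epsilon>0\),
\[
    \Prob^{*}\bigl(|n_g^{-1}(\mathrm{AIC}^{*}_{gj}-\mathrm{AIC}^{*}_{g\ell_g^\ast})-2\kappa_{gj}|>\epsilon\bigr)\overset{p}{\to}0.
\]
Taking \(\kappa_g=\min_j\kappa_{gj}\) and any \(c\in(0,\kappa_g)\), the conditional probability of the event \(\{\mathrm{AIC}^{*}_{gj}-\mathrm{AIC}^{*}_{g\ell_g^\ast}\ge 2cn_g\ \forall j\ne\ell_g^\ast\}\) tends to one in probability. Here the finiteness of \(\mathcal M\) lets me take a union bound over \(j\).

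On that event I rewrite the softmax with \(\mathrm{AIC}^{*}_{g\ell_g^\ast}\) subtracted, as in the original proof. This gives
\[
    \widehat w^{*}_{gj}\le e^{-cn_g}\ \text{for each } j\ne\ell_g^\ast,
    \qquad
    1-\widehat w^{*}_{g\ell_g^\ast}\le (Q-1)e^{-cn_g}.
\]
It follows that \(\sqrt{n_g}\sum_{j\ne\ell_g^\ast}\widehat w^{*}_{gj}\to0\) conditionally in probability. The bootstrap weights \(B^{*}_{gi}\) enter only through \(\widehat M^{*}\), which is already covered by Assumption~\ref{ass:bootstrap_criterion}, so Assumption~\ref{ass:bootstrap_weights} is not needed beyond that.

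I expect the main obstacle to be the passage from uniform closeness of the criteria to closeness of their maximized values, carried out in the conditional-in-probability sense. The argument needs \(\Theta_\ell\) to be such that the suprema are attained, or at least that the supremum difference is bounded by the uniform criterion difference. It also needs some care in combining an \(o_p^{*}(1)\) statement with an ordinary \(o_p(1)\) statement, including the bookkeeping of the outer probabilities. I would handle the latter by stating both on a common event of data-probability tending to one and bounding the conditional probability on that event. No rate for \(\widehat\theta^{*}_{g\ell}\) is needed, since the \([0,1]\) bound absorbs everything else.
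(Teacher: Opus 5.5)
Your proposal is correct and follows essentially the same route as the paper. Both arguments bound the supremum difference by the weight on the wrong families, using that each $\partial_u C_\ell$ lies in $[0,1]$. Both then obtain conditional separation of the bootstrap AIC criteria by transferring the uniform closeness in Assumption~\ref{ass:bootstrap_criterion} to the maximized criteria, which yields weights of order $O_p^{*}\{\exp(-cn_g)\}$. The only cosmetic difference is that the paper bounds the gap between the bootstrap and sample AIC differences by $4R^{*}_{ng}$ (the penalties cancel) and imports the sample limit $2\kappa_{gj}$ from the proof of Lemma~\ref{lem:oracle_equivalence}, whereas you chain bootstrap-to-sample-to-population maximized criteria directly.
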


\begin{proof}
Fix \(g\in\{1,2\}\), and write
\[
    \widehat Q_{g\ell}
    =
    \widehat M_{g\ell}(\widehat\theta_{g\ell}),
    \qquad
    \widehat Q_{g\ell}^{*}
    =
    \widehat M_{g\ell}^{*}(\widehat\theta_{g\ell}^{*}).
\]
Let
\[
    R_{ng}^{*}
    =
    \max_{1\le\ell\le Q}
    \sup_{\theta\in\Theta_\ell}
    \left|
    \widehat M_{g\ell}^{*}(\theta)
    -
    \widehat M_{g\ell}(\theta)
    \right|.
\]
By the maximizing properties of \(\widehat\theta_{g\ell}\) and
\(\widehat\theta_{g\ell}^{*}\),
\[
    \widehat Q_{g\ell}-R_{ng}^{*}
    \le
    \widehat Q_{g\ell}^{*}
    \le
    \widehat Q_{g\ell}+R_{ng}^{*}.
\]
Consequently,
\[
    \max_{1\le\ell\le Q}
    |\widehat Q_{g\ell}^{*}-\widehat Q_{g\ell}|
    \le R_{ng}^{*}
    =o_p^{*}(1)
\]
conditionally on the data, in probability.

The proof of Lemma~\ref{lem:oracle_equivalence} gives, for every
\(j\ne\ell_g^\ast\),
\[
    \frac{AIC_{gj}-AIC_{g\ell_g^\ast}}{n_g}
    \overset{p}{\longrightarrow}
    2\kappa_{gj},
    \qquad \kappa_{gj}>0.
\]
Because the AIC penalty terms are unchanged in the bootstrap criterion,
\[
\begin{aligned}
&\max_{j\ne\ell_g^\ast}
\left|
\frac{AIC_{gj}^{*}-AIC_{g\ell_g^\ast}^{*}}{n_g}
-
\frac{AIC_{gj}-AIC_{g\ell_g^\ast}}{n_g}
\right| \\
&\qquad\le
4R_{ng}^{*}
=o_p^{*}(1).
\end{aligned}
\]
It follows that, conditionally on the data and in probability,
\[
    \frac{AIC_{gj}^{*}-AIC_{g\ell_g^\ast}^{*}}{n_g}
    \overset{p^{*}}{\longrightarrow}
    2\kappa_{gj},
    \qquad j\ne\ell_g^\ast.
\]

Let \(\kappa_g=\min_{j\ne\ell_g^\ast}\kappa_{gj}>0\). For any fixed
\(c\in(0,\kappa_g)\), with conditional probability approaching one, in
probability,
\[
    AIC_{gj}^{*}-AIC_{g\ell_g^\ast}^{*}
    \ge 2cn_g,
    \qquad j\ne\ell_g^\ast.
\]
Thus \(\ell_g^\ast\) minimizes the bootstrap AIC on this event, and the
bootstrap weights satisfy
\[
    \widehat w_{gj}^{*}
    =O_p^{*}\{\exp(-cn_g)\},
    \qquad
    1-\widehat w_{g\ell_g^\ast}^{*}
    =O_p^{*}\{\exp(-cn_g)\}.
\]

Finally, each copula derivative \(\partial_u C_\ell(u,v;\theta)\) is a
conditional distribution function and therefore takes values in \([0,1]\).
Writing \(\widehat h_{g\ell}^{*}\) for the candidate-specific bootstrap
conditional distribution estimator, we obtain
\[
\begin{aligned}
&\sup_{(x,y)\in\Lambda_\varepsilon}
\left|
\widehat H_g^{*}(y\mid x)-\widehat H_g^{*,o}(y\mid x)
\right| \\
&\quad\le
\sum_{j\ne\ell_g^\ast}\widehat w_{gj}^{*}
=O_p^{*}\{\exp(-cn_g)\}
=o_p^{*}(n_g^{-1/2}).
\end{aligned}
\]
This proves the lemma.
\end{proof}

The second lemma records the conditional expansion of the CMPL estimator in
the limiting family.

\begin{lemma}[Bootstrap expansion of the copula parameter estimator]
\label{lem:bootstrap_parameter_expansion}
Suppose Assumptions~\ref{ass:basic_regular},
\ref{ass:copula_first_stage}, \ref{ass:bootstrap_weights}, and
\ref{ass:bootstrap_criterion} hold. For each
population \(g\in\{1,2\}\), let \(\widehat\theta_g\) denote the CMPL estimator
of the parameter \(\theta_g\) in the limiting selected copula family
\(\ell_g^\ast\). Let \(\widehat\theta_g^{*}\) be the corresponding weighted
bootstrap CMPL estimator based on the bootstrap weights
\(B_g^{*}=(B_{g1}^{*},\ldots,B_{gn_g}^{*})^\top\). Then, conditional on the
observed data,
\[
    \sqrt{n_g}(\widehat\theta_g^{*}-\widehat\theta_g)
    =
    \frac{1}{\sqrt{n_g}}
    \sum_{i=1}^{n_g}
    (B_{gi}^{*}-1)\psi_g(Z_{gi})
    +
    o_p^{*}(1),
    \qquad g=1,2,
\]
where \(\psi_g\) is the influence function in the asymptotic linear expansion
of \(\widehat\theta_g\), and \(o_p^{*}(1)\) denotes convergence to zero in
probability conditional on the data.
\end{lemma}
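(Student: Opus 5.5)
We will mirror the classical Z-estimator argument for $\widehat\theta_g$, replacing the empirical measure by its exchangeably weighted version. Write $\widehat M_g^{*}(\theta)=n_g^{-1}\sum_i B_{gi}^{*}\ell_g(\widehat U_{gi}^{*},\widehat V_{gi}^{*};\theta)$ for the bootstrap criterion in family $\ell_g^\ast$.

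\textbf{Step 1: consistency.} Assumption~\ref{ass:bootstrap_criterion} gives $\sup_\theta|\widehat M_g^{*}-\widehat M_g|=o_p^{*}(1)$. Combined with Assumption~\ref{ass:copula_first_stage}, this yields $\sup_\theta|\widehat M_g^{*}-M_{g\ell_g^\ast}|=o_p^{*}(1)$. Since $\theta_g$ is the unique (well-separated, by compactness and continuity) maximizer, the argmax theorem gives $\widehat\theta_g^{*}\to\theta_g$ in conditional probability. Because $\widehat\theta_g\to\theta_g$ as well, both estimators are interior with probability approaching one.

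\textbf{Step 2: score linearization.} Both estimators satisfy first-order conditions. For the bootstrap estimator these read $0=n_g^{-1}\sum_i B_{gi}^{*}s_g(\widehat U_{gi}^{*},\widehat V_{gi}^{*};\widehat\theta_g^{*})$. A mean-value expansion in $\theta$ around $\widehat\theta_g$ produces the Hessian term $-n_g^{-1}\sum_iB_{gi}^{*}\nabla_\theta s_g$ at an intermediate point. Using the dominated derivative classes, the exchangeable-weights Glivenko--Cantelli property of \citet{PraestgaardWellner1993Exchangeably}, and Step 1, this term converges to $J_g$, which is nonsingular.

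The remaining score term is decomposed as
\[
n_g^{-1/2}\sum_i (B_{gi}^{*}-1)s_g(\widehat U_{gi},\widehat V_{gi};\widehat\theta_g)
+n_g^{-1/2}\sum_i B_{gi}^{*}\{s_g(\widehat U_{gi}^{*},\widehat V_{gi}^{*};\widehat\theta_g)-s_g(\widehat U_{gi},\widehat V_{gi};\widehat\theta_g)\},
\]
where the original-sample score sum vanishes by the first-order condition for $\widehat\theta_g$.

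\textbf{Step 3: the margin perturbation (main obstacle).} The second sum is Taylor expanded in $(u,v)$. Since $\widehat U_{gi}^{*}-\widehat U_{gi}=n_g^{-1}\sum_j(B_{gj}^{*}-1)\mathbf 1(X_{gj}\le X_{gi})$ up to $O(n_g^{-1})$, the leading term becomes a double sum. Exchanging the order of summation, it equals
\[
n_g^{-1/2}\sum_j(B_{gj}^{*}-1)\,n_g^{-1}\sum_i\partial_us_g(\widehat U_{gi},\widehat V_{gi})\mathbf 1(\widehat U_{gj}\le\widehat U_{gi}),
\]
with an analogous expression in $v$. The inner average converges uniformly to $E[\partial_u s_g(U',V')\mathbf 1(u_0\le U')]$ by Glivenko--Cantelli arguments. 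Because $\sum_j(B_{gj}^{*}-1)=0$, the centering by $U'$ in $r_{gX}$ is free, and this term reproduces $r_{gX}(U_{gj})$.

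This is the delicate step. The remainder requires stochastic equicontinuity of the bootstrap empirical process indexed by the score class under random margin perturbations, with $\sup|\widehat F^{*}-\widehat F|=O_p^{*}(n_g^{-1/2})$. We would invoke the conditional exchangeable-bootstrap Donsker theorem of \citet{PraestgaardWellner1993Exchangeably}, which is where the $\|B\|_{2,1}$ and tail conditions of Assumption~\ref{ass:bootstrap_weights} enter, together with the Donsker and continuous-differentiability conditions on the score in Assumption~\ref{ass:copula_first_stage}. Near the unit-square boundary, the square-integrable envelopes and the rank-based arguments of \citet{Tsukahara2005} control the contribution.

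\textbf{Step 4: collecting terms.} In the first sum, $s_g$ evaluated at the estimated pseudo-observations and at $\widehat\theta_g$ is replaced by $s_g(U_{gi},V_{gi})$. The difference is an $o_p^{*}(1)$ weighted process by the same equicontinuity argument and the consistency of $(\widehat F,\widehat\theta_g)$. Multiplying by $J_g^{-1}$ then yields $n_g^{-1/2}\sum_i(B_{gi}^{*}-1)J_g^{-1}\{s_g+r_{gX}+r_{gY}\}=n_g^{-1/2}\sum_i(B_{gi}^{*}-1)\psi_g(Z_{gi})$, matching Remark~\ref{rem:psi_g}.
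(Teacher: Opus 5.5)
Your argument is correct, but it takes a different route from the paper. The paper does not linearize the bootstrap first-order condition itself. It verifies the hypotheses of Theorem~1 of \citet{ChengHuang2010Bootstrap}: the weight conditions from Assumption~\ref{ass:bootstrap_weights}, consistency and near-maximization, the equicontinuity and Donsker conditions, nuisance rates with $\gamma=1/2$, and nonsingularity of $A_g=-J_g$. It then reads off the adjusted score by differentiating the population score map $\mathcal S_g(\theta,F_X,F_Y)$ in the empirical-CDF directions $h_{X,z_0},h_{Y,z_0}$, which gives $r_{gX}(u_0)+r_{gY}(v_0)$ and hence $\psi_g$.

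You instead expand the bootstrap score around $\widehat\theta_g$ and remove the original-sample score using its own first-order condition. You then recover $r_{gX}$ by interchanging the double sum in the margin-perturbation term; that interchange is the finite-sample counterpart of the paper's pathwise derivative $\dot{\mathcal S}_{g,F}$. The paper's route is shorter and delegates the exchangeable-weight equicontinuity to a cited theorem. Yours makes explicit that the margin effect enters at first order. It is therefore the derivative computation, not a nuisance-rate condition such as $\gamma>1/4$, that produces $r_{gX}$ and $r_{gY}$. Your route also shows transparently why the centering in $r_{gX}$ is immaterial, namely $\sum_j(B^*_{gj}-1)=0$.

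Two details should be tightened.

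First, before Taylor expanding the margin-perturbation sum, split $B^*_{gi}=1+(B^*_{gi}-1)$. The $(B^*_{gi}-1)$ part is the bootstrap empirical process evaluated at a score difference under an $O_p^*(n_g^{-1/2})$ margin perturbation. It is therefore $o_p^*(1)$ by the equicontinuity you already invoke. Only the unit-weight part should be interchanged, so your displayed "equals" is not literally an equality.

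Second, subtract the constant $n_g^{-1}\sum_i\partial_u s_g(\widehat U_{gi},\widehat V_{gi};\widehat\theta_g)\widehat U_{gi}$ from the inner average \emph{before} passing to the limit. Under the envelope condition as literally stated, the uncentered limit exists. However, for common families such as the Gaussian copula, $E|\partial_u s_g(U_g,V_g)|=\infty$, so only the centered average converges. The subtraction costs nothing, for exactly the reason you identify: the weights sum to $n_g$.
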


\begin{proof}
Fix \(g\) and suppress the population subscript where no confusion can arise.
We apply Theorem~1 of \citet{ChengHuang2010Bootstrap} to the semiparametric
criterion
\[
    m_g(\theta,F_{gX},F_{gY})(x,y)
    =
    \log c_{\ell_g^\ast}
    \{F_{gX}(x),F_{gY}(y);\theta\},
\]
whose Euclidean parameter is \(\theta\) and whose nuisance parameter is
\(F_g=(F_{gX},F_{gY})\). Its empirical and weighted empirical versions are
\[
\begin{aligned}
    M_{ng}(\theta)
    =
    \frac{1}{n_g}
    \sum_{i=1}^{n_g}
    \log c_{\ell_g^\ast}
    \{\widehat F_{gX}(X_{gi}),\widehat F_{gY}(Y_{gi});\theta\},
    \\
    M_{ng}^{*}(\theta)
    =
    \frac{1}{n_g}
    \sum_{i=1}^{n_g}
    B_{gi}^{*}
    \log c_{\ell_g^\ast}
    \{\widehat F_{gX}^{*}(X_{gi}),
      \widehat F_{gY}^{*}(Y_{gi});\theta\}.
\end{aligned}
\]
We verify the conditions of that theorem. Assumption~\ref{ass:bootstrap_weights}
is precisely the exchangeability, nonnegativity, normalization, tail, and
empirical-variance requirements W1--W5 of
\citet{ChengHuang2010Bootstrap}, with their scale constant \(c=1\). These
requirements cover the multinomial bootstrap and imply the exchangeably
weighted empirical-process limit of
\citet[Theorem~2.2]{PraestgaardWellner1993Exchangeably}.

The uniform convergence and unique-maximizer conditions in
Assumption~\ref{ass:copula_first_stage} imply
\(\widehat\theta_g\to_p\theta_g\). Assumption~\ref{ass:bootstrap_criterion}
then gives
\(\widehat\theta_g^*\overset{p^*}{\longrightarrow}\theta_g\) in probability.
Because \(\theta_g\) is interior and both estimators maximize their respective
criteria, their parameter-score equations, together with the marginal-score
linearization below, verify the nearly maximizing requirements (7) and (22) of
\citet{ChengHuang2010Bootstrap}.

The differentiability, envelope, and Donsker conditions in
Assumption~\ref{ass:copula_first_stage} give Conditions S1--S2 and
SB1--SB2: they yield the required stochastic equicontinuity, the local
quadratic expansion, the tail control of the local score envelope, and the
bootstrap Donsker property. For the nuisance rates, the ordinary empirical
distribution functions satisfy
\[
    \|\widehat F_{gX}-F_{gX}\|_\infty
    +
    \|\widehat F_{gY}-F_{gY}\|_\infty
    =
    O_p(n_g^{-1/2}),
\]
so Condition S3 holds with \(\gamma=1/2\). Moreover, the weighted empirical
process theorem just cited gives, conditionally on the data and in probability,
\[
    \|\widehat F_{gX}^{*}-\widehat F_{gX}\|_\infty
    +
    \|\widehat F_{gY}^{*}-\widehat F_{gY}\|_\infty
    =
    O_p^{*}(n_g^{-1/2})
    \quad
    \text{in probability}.
\]
The use of \(n_g+1\), rather than \(n_g\), in the pseudo-observations changes
these bounds only by \(O(n_g^{-1})\). The triangle inequality therefore gives
the bootstrap nuisance-rate Condition SB3 with \(\gamma=1/2>1/4\). Finally,
the positive-information Condition I holds because the derivative matrix in
the parameter direction is \(A_g=-J_g\), which is nonsingular, and the
covariance matrix of the adjusted score is positive definite by
Assumption~\ref{ass:copula_first_stage}.

It remains to identify the adjusted score in the cited theorem. Let
\(s_g\) be as in Remark~\ref{rem:psi_g}, and define the population score map
\[
    \mathcal S_g(\theta,F_X,F_Y)
    =
    E\!\left[
    \partial_\theta\log c_{\ell_g^\ast}
    \{F_X(X_g),F_Y(Y_g);\theta\}
    \right].
\]
Its derivative in marginal directions \((h_X,h_Y)\), evaluated at
\((\theta_g,F_{gX},F_{gY})\), is
\[
    \dot{\mathcal S}_{g,F}[h_X,h_Y]
    =
    E\!\left[
    \partial_u s_g(U_g,V_g)h_X(X_g)
    +
    \partial_v s_g(U_g,V_g)h_Y(Y_g)
    \right].
\]
For an observation \(z_0=(x_0,y_0)\), the empirical-CDF influence directions
are
\[
    h_{X,z_0}(x)=\mathbf 1(x_0\le x)-F_{gX}(x),
    \qquad
    h_{Y,z_0}(y)=\mathbf 1(y_0\le y)-F_{gY}(y).
\]
Continuity of the margins and the probability integral transform then give
\[
    \dot{\mathcal S}_{g,F}[h_{X,z_0},h_{Y,z_0}]
    =
    r_{gX}(u_0)+r_{gY}(v_0),
    \qquad
    (u_0,v_0)=\bigl(F_{gX}(x_0),F_{gY}(y_0)\bigr).
\]
Hence the adjusted score \(\widetilde m_{g0}\) in Theorem~1 of
\citet{ChengHuang2010Bootstrap} is, in the present notation,
\[
    \widetilde m_{g0}(Z_g)
    =
    s_g(U_g,V_g)+r_{gX}(U_g)+r_{gY}(V_g),
\]
and \(-A_g^{-1}\widetilde m_{g0}=\psi_g\) by
Remark~\ref{rem:psi_g}. The cited theorem therefore yields
\[
    \sqrt{n_g}(\widehat\theta_g^{*}-\widehat\theta_g)
    =
    \frac{1}{\sqrt{n_g}}
    \sum_{i=1}^{n_g}
    (B_{gi}^{*}-1)\psi_g(Z_{gi})
    +
    o_p^{*}(1).
\]
The argument applies separately to \(g=1\) and \(g=2\); their bootstrap weights
are conditionally independent by Assumption~\ref{ass:bootstrap_weights}.
\end{proof}

\begin{proof}[Proof of Theorem~\ref{thm:bootstrap_cv_consistency}]
The proof proceeds in two steps. First, we establish the conditional weak
convergence of the bootstrap process. Second, we pass from the full process on
\(\Lambda_\varepsilon\) to the supremum over the estimated contact set.

Define the bootstrap process
\[
    \mathbb G_N^{*}(z)
    =
    s_N\{\widehat\Delta^{*}(z)-\widehat\Delta(z)\},
    \qquad z\in\Lambda_\varepsilon .
\]
We first show that, conditional on the data,
\[
    \mathbb G_N^{*}
    \rightsquigarrow_B
    \mathbb G_\Delta
    \quad
    \text{in } \ell^\infty(\Lambda_\varepsilon)
\]
in probability.

By the bootstrap expansion in
Lemma~\ref{lem:bootstrap_parameter_expansion}, the bootstrap oracle equivalence
in Lemma~\ref{lem:bootstrap_oracle_equivalence}, and the original oracle
equivalence in Lemma~\ref{lem:oracle_equivalence}, together with the bootstrap
empirical-process expansion of the marginal distribution estimators, the
bootstrap analogue of Theorem~\ref{thm:linear_h} gives, uniformly over
\(z\in\Lambda_\varepsilon\),
\[
    \sqrt{n_g}\{\widehat H_g^{*}(z)-\widehat H_g(z)\}
    =
    \frac{1}{\sqrt{n_g}}
    \sum_{i=1}^{n_g}
    (B_{gi}^{*}-1)\phi_{g,z}(Z_{gi})
    +
    o_p^{*}(1),
    \qquad g=1,2.
\]
Therefore,
\[
\begin{aligned}
    \mathbb G_N^{*}(z)
    &=
    \sqrt{1-\widehat\lambda}
    \frac{1}{\sqrt{n_1}}
    \sum_{i=1}^{n_1}
    (B_{1i}^{*}-1)\phi_{1,z}(Z_{1i})  \\
    &\quad -
    \sqrt{\widehat\lambda}
    \frac{1}{\sqrt{n_2}}
    \sum_{i=1}^{n_2}
    (B_{2i}^{*}-1)\phi_{2,z}(Z_{2i})
    +
    o_p^{*}(1),
\end{aligned}
\]
uniformly over \(z\in\Lambda_\varepsilon\), where
\(\widehat\lambda=n_1/(n_1+n_2)\).

The classes
\[
    \mathcal F_g=\{\phi_{g,z}:z\in\Lambda_\varepsilon\},
    \qquad g=1,2,
\]
are Donsker by the proof of Theorem~\ref{thm:linear_h}. Since the bootstrap
weights \(B_{gi}^{*}\) satisfy Assumption~\ref{ass:bootstrap_weights}, the
weighted bootstrap empirical process central limit theorem implies that,
conditionally on the data and in probability,
\[
    \frac{1}{\sqrt{n_g}}
    \sum_{i=1}^{n_g}
    (B_{gi}^{*}-1)\phi_{g,\cdot}(Z_{gi})
    \rightsquigarrow_B
    \mathbb G_g
    \quad
    \text{in } \ell^\infty(\Lambda_\varepsilon),
    \qquad g=1,2.
\]
Because the two samples and the bootstrap weights are independent across
populations, and because \(\widehat\lambda\to\lambda\), we obtain
\begin{equation}
\label{eq:bootstrap_process_limit}
    \mathbb G_N^{*}
    \rightsquigarrow_B
    \mathbb G_\Delta
    \quad
    \text{in } \ell^\infty(\Lambda_\varepsilon)
\end{equation}
conditionally on the data, in probability.

We now pass to the supremum over the estimated contact set. Under the boundary
null, \(\sup_{z\in\Lambda_\varepsilon}\Delta(z)=0\). Moreover,
Theorem~\ref{thm:two_sample_sup_limit} implies
\[
    \|s_N(\widehat\Delta-\Delta)\|_{\infty,\Lambda_\varepsilon}=O_p(1).
\]
Hence Lemma~\ref{lem:contact_set_consistency} applies. For every \(\eta>0\),
with probability approaching one,
\[
    \Gamma(\Delta)
    \subseteq
    \widehat\Gamma_N
    \subseteq
    \Gamma^\eta(\Delta).
\]
Therefore, with probability approaching one,
\[
    \sup_{z\in\Gamma(\Delta)}
    \mathbb G_N^{*}(z)
    \le
    \sup_{z\in\widehat\Gamma_N}
    \mathbb G_N^{*}(z)
    \le
    \sup_{z\in\Gamma^\eta(\Delta)}
    \mathbb G_N^{*}(z).
\]
The conditional weak convergence in \eqref{eq:bootstrap_process_limit} implies
conditional tightness and asymptotic uniform equicontinuity of
\(\mathbb G_N^{*}\) on \(\Lambda_\varepsilon\). Since the limiting Gaussian
process has uniformly continuous sample paths on the compact set
\(\Lambda_\varepsilon\), letting \(\eta\downarrow0\) yields
\[
    \sup_{z\in\widehat\Gamma_N}
    \mathbb G_N^{*}(z)
    -
    \sup_{z\in\Gamma(\Delta)}
    \mathbb G_N^{*}(z)
    =
    o_p^{*}(1)
\]
in probability. Hence
\[
    T_N^{*}
    =
    \sup_{z\in\widehat\Gamma_N}
    \mathbb G_N^{*}(z)
    \rightsquigarrow_B
    \sup_{z\in\Gamma(\Delta)}
    \mathbb G_\Delta(z)
\]
conditionally on the data, in probability.

If the distribution of
\(\sup_{z\in\Gamma(\Delta)}\mathbb G_\Delta(z)\) is continuous at its
\((1-\alpha)\)-quantile \(c_{1-\alpha}\), then the conditional quantile
\(\widehat c_{1-\alpha}^{\,*}\) of \(T_N^{*}\) satisfies
\[
    \widehat c_{1-\alpha}^{\,*}
    \overset{p}{\longrightarrow}
    c_{1-\alpha}.
\]
\end{proof}

\end{document}